\documentclass[letterpaper, 11 pt]{article}  
\usepackage{amsthm}

\usepackage[left=2.2cm, right=2.2cm, top=3cm, bottom=3cm]{geometry}             

\usepackage[pdftex]{graphicx}
\usepackage{amsmath} 
\usepackage{amssymb}  
\usepackage{color}
\usepackage{enumerate}
\usepackage{gensymb}
\usepackage{multirow}
\usepackage{mathtools}
\usepackage{booktabs}
\usepackage{epstopdf}
\usepackage{algorithm,algcompatible}
\usepackage[normalem]{ulem}

\usepackage{url}
\usepackage{bm}

\usepackage{booktabs}

\newtheorem{theorem}{Theorem}
\newtheorem{remark}{Remark}
\newtheorem{proposition}{Proposition}

\makeatletter
\newsavebox\myboxA
\newsavebox\myboxB
\newlength\mylenA

\newcommand*\xoverline[2][0.75]{%
    \sbox{\myboxA}{$\m@th#2$}%
    \setbox\myboxB\null
    \ht\myboxB=\ht\myboxA%
    \dp\myboxB=\dp\myboxA%
    \wd\myboxB=#1\wd\myboxA
    \sbox\myboxB{$\m@th\overline{\copy\myboxB}$}
    \setlength\mylenA{\the\wd\myboxA}
    \addtolength\mylenA{-\the\wd\myboxB}%
    \ifdim\wd\myboxB<\wd\myboxA%
       \rlap{\hskip 0.5\mylenA\usebox\myboxB}{\usebox\myboxA}%
    \else
        \hskip -0.5\mylenA\rlap{\usebox\myboxA}{\hskip 0.5\mylenA\usebox\myboxB}%
    \fi}
\makeatother

\newcommand{\R}{\mathbb{R}}





\usepackage{tabularx}
\usepackage{booktabs}

\newtheorem{definition}{Definition}
\newtheorem{assumption}{Assumption}

\makeatletter
\let\NAT@parse\undefined
\makeatother
\usepackage{hyperref}
\usepackage[doi=false,isbn=false,defernumbers=true,style=ieee,backend=bibtex,maxbibnames=1000]{biblatex}

\bibliography{AMPC.bib}



\usepackage{accents}
\newcommand{\ubar}[1]{\underaccent{\bar}{#1}}

\usepackage{balance}

\title{\LARGE \bf Adaptive MPC under Time Varying Uncertainty: Robust and Stochastic}

\author{Monimoy~Bujarbaruah,
        Xiaojing Zhang,
        Marko Tanaskovic,
        and~Francesco Borrelli\thanks{M. Bujarbaruah, X. Zhang and F. Borrelli are with the Department of Mechanical Engineering, University of California Berkeley, Berkeley, CA 94720 USA; e-mail: \{monimoyb, xiaojing.zhang, fborrelli\}@berkeley.edu. M. Tanaskovic is with Univerzitet Singidunum, Belgrade, Serbia; e-mail: mtanaskovic@singudunum.ac.rs.}
}

\begin{document}

\maketitle
  \thispagestyle{empty}
\pagestyle{empty}

\maketitle

\begin{abstract}

This paper deals with the problem of formulating an adaptive Model Predictive Control strategy for constrained uncertain systems. We consider a linear system, in presence of bounded time varying additive uncertainty. The uncertainty is decoupled as the sum of a process noise with known bounds, and a time varying offset that we wish to identify. The time varying offset uncertainty is assumed unknown point-wise in time. Its domain, called the Feasible Parameter Set, and its maximum rate of change are known to the control designer. As new data becomes available, we refine the Feasible Parameter Set with a Set Membership Method based approach, using the known bounds on process noise. We consider two separate cases of robust and probabilistic constraints on system states, with hard constraints on actuator inputs. In both cases, we \emph{robustly} satisfy the imposed constraints for all possible values of the offset uncertainty in the Feasible Parameter Set. By imposing adequate terminal conditions, we prove recursive feasibility and stability of the proposed algorithms. The efficacy of the proposed robust and stochastic Adaptive MPC algorithms is illustrated with detailed numerical examples. 

\end{abstract}

\section{Introduction}
Model Predictive Control (MPC) is an established control methodology  for dealing with constrained, and possibly uncertain systems 
\cite{mayne2000constrained, morari1999model, borrelli2017predictive}. Primary challenges in MPC design include presence of disturbances and/or unknown model parameters. Disturbances can be handled by means of robust or chance constraints, and such methods are generally well understood 
\cite{kothare1996robust, schwarm1999chance, carson2013robust, langson2004robust, Goulart2006, limon2010robust,  tempo2012randomized, ZhangKamgGeorghiouGoulLyg_Aut16}. In this paper, we are looking into methods for addressing the  challenge posed by model uncertainties when adaptation is introduced in the design. 

If the actual model of a system is unknown, adaptive control strategies have been applied for meeting control objectives and ensuring a system's stability. Adaptive control for unconstrained systems has been widely studied and is generally well-understood \cite{sastry2011adaptive, krstic1995nonlinear}. In recent times, this concept of online model adaptation has been extended to MPC controller design for systems subject to both robust and probabilistic constraints \cite{fukushima2007adaptive, dehaan2007adaptive, adetola2009adaptive, aswani2013provably, chowdhary2013concurrent,  wang2014adaptive, marafioti2014persistently,tanaskovic2014adaptive,weiss2014robust, ostafew2014learning,hewing2017cautious,lorenzen2017adaptive, limon2017learning, heirung2017dual, 2017arXiv171207548T, bujarbaruahAdapFIR, hernandez2018stabilizing, soloperto2018learning}. 

The vast majority of literature on adaptive MPC for uncertain systems has focused on robust constraint satisfaction.
For linear systems, works such as \cite{aswani2013provably, hernandez2018stabilizing,soloperto2018learning} have typically focused on improving performance with the adapted models (e.g. low closed-loop cost), while the constraints are satisfied robustly for all possible modeling errors and all disturbances realizations. Here, the domain (support) of the model uncertainty is  not adapted in real-time, which may lead to suboptimal controllers.
The work of \cite{tanaskovic2013adaptive,tanaskovic2014adaptive,2017arXiv171207548T} deal with both time invariant and time varying system uncertainty in Finite Impulse Response (FIR) domain and prove recursive feasibility and stability \cite[Chapter~12]{borrelli2017predictive} of the proposed approaches. However, such FIR parametrization restricts application to primarily slow and stable systems. In \cite{lorenzen2017adaptive, lorenzenAutomaticaAMPC, luAccCannon}, Linear Parameter Varying (LPV) models are considered, and recursive feasibility of robust constraints and closed-loop stability properties are ensured in presence of unknown, but \emph{time-invariant} parameters only. 
The authors in \cite{fukushima2007adaptive} also  formulate an adaptive MPC strategy for an LPV system using the concept of \emph{comparison models}, but do not consider any disturbances or process noise.
For nonlinear systems, works such as \cite{dehaan2007adaptive, adetola2009adaptive, wang2014adaptive} propose robust adaptive MPC algorithms, but since these require construction of invariant sets \cite[Chapter 10]{borrelli2017predictive} for such systems, they are computationally demanding. 

Literature on adaptive MPC for systems with \emph{probabilistic} constraints is   more limited. The work in \cite{hewing2017cautious, ostafew2014learning} use data driven approaches for real-time model learning together with a stochastic MPC controller, but without guarantees on feasibility and stability. In \cite{heirung2017dual, bujarbaruahAdapFIR} recursively feasible adaptive stochastic MPC algorithms are presented, but for static input-output system models only. To the best of our knowledge, no adaptive MPC framework has been presented in literature that ensures recursive feasibility and stability for systems in state-space under probabilistic constraints. 
 
In this paper, we  build on the work of \cite{tanaskovic2014adaptive,2017arXiv171207548T, bujarbaruahAdapFIR},
and propose a unified and tractable \emph{Adaptive MPC} framework for systems represented by state-space models, that can take into account both robust and probabilistic state constraints, and hard input constraints, while guaranteeing recursive feasibility and stability. 
Specifically, we consider linear systems that are subject to bounded additive uncertainty, which is composed of: $(i)$ a process noise, and $(ii)$ an unknown, but bounded offset that we try to estimate.
Given an initial estimate of the offset's domain, we iteratively refine it using a Set Membership Method based approach \cite{tanaskovic2014adaptive}, as new data becomes available. In order to design an MPC controller with the unknown offset, we make sure the constraints on states and inputs are satisfied for all feasible offsets at a time instant. Here a ``feasible offset" is an offset belonging to the current estimation of the offset's domain. 
As the feasible offset domain is updated with data progressively, we obtain an on-line adaptation in the MPC algorithm. Furthermore, the offset uncertainty present in the system is considered time varying and its maximum rate of change is assumed bounded and known \cite{wang2014adaptive,2017arXiv171207548T}. The main contributions of this paper can be summarized as follows:
\begin{itemize}
    \item We propose a Set Membership Method based model adaptation algorithm to estimate and update the time varying offset uncertainty, using a so-called Feasible Parameter Set. The model adaptation algorithm guarantees containment of the true offset uncertainty in the Feasible Parameter Set at all times. 
    This extends the works of \cite{tanaskovic2013adaptive, tanaskovic2014adaptive, 2017arXiv171207548T,lorenzen2017adaptive,2018arXiv180409831B} to time varying model uncertainties in state space.

    \item We propose an adaptive MPC framework for systems perturbed by such additive time varying offset uncertainty and process noise. The framework handles robust and chance constraints on system states respectively, with hard input constraints, while using data to progressively obtain offset uncertainty adaptation. With appropriately chosen terminal conditions, we guarantee recursive feasibility and Input to State Stability (ISS) of the proposed adaptive MPC algorithms, which is an addition to the work of \cite{hewing2017cautious, ostafew2014learning, soloperto2018learning, koller2018learning}. Compared to \cite{dehaan2007adaptive, adetola2009adaptive, wang2014adaptive}, computation of terminal invariant sets is simpler, as we focus on linear systems. Moreover, as opposed to \cite{aswani2013provably,hernandez2018stabilizing, soloperto2018learning}, we utilize the model adaptation information in real-time for modifying constraints.
\end{itemize}
The paper is organized as follows: in Section~\ref{sec:prob_for} we formulate the optimization problems to be solved and also define the imposed constraints. The offset uncertainty adaptation framework is presented in Section~\ref{sec:offset_adap}. We propose the Adaptive Robust MPC algorithm in Section~\ref{sec:robust_MPC} and Adaptive Stochastic MPC algorithm in Section~\ref{sec:stoch_MPC}. The feasibility and stability properties of the aforementioned algorithms are discussed in Section~\ref{sec:feasandstab}. We then present detailed numerical simulations in Section~\ref{sec:simul}. 
\section{Problem Formulation} \label{sec:prob_for}
Given an initial state $x_S$, we consider uncertain linear time-invariant systems of the form:
\begin{equation}\label{eq:unc_system}
	x_{t+1} = Ax_t + Bu_t + E\theta^a_t+ w_t,~x_0 = x_S,
\end{equation}
where $x_t\in \mathbb{R}^{n}$ is the state at time step $t$, $u_t\in\mathbb{R}^{m}$ is the input, and $A$ and $B$ are known system matrices of appropriate dimensions. At each time step $t$, the system is affected by an i.i.d.\ random process noise $w_t \in \mathbb W \subset \mathbb{R}^{n}$, whose probability distribution function (PDF) is assumed known, or can be estimated empirically from data \cite{waterman1978estimation, masson2006inferring}.  For simplicity , $\mathbb W$ is assumed to be a hyperrectangle containing zero as: 
\begin{align}\label{eq:w_bounds}
    \mathbb{W} = \{w: -\bar{w} \leq w \leq \bar{w} \}.
\end{align}
We also consider the presence of $\theta^a_t \in \mathbb{R}^p$, a bounded, time varying offset uncertainty, which enters the system through the constant known matrix $E \in \mathbb{R}^{n \times p}$. 

\begin{remark}
In reality, the additive uncertainty in the system could be difficult to  split into two parts as considered in \eqref{eq:unc_system}. However, such a decomposition enables us to deal with parametric model uncertainties. Although, we have formulated the problem with only additive uncertainty in \eqref{eq:unc_system}, where $A$ and $B$ are known matrices, one can also upper bound effect of parametric uncertainties in $A$ and $B$ with an additive uncertain term (similar to offset $E\theta^a_t$) and propagate the system dynamics \eqref{eq:unc_system} with a chosen set of nominal $\bar{A},\bar{B}$ matrices. 
\end{remark}

\begin{assumption} \label{ass:rate_theta}
We assume the true offset $\theta^a_t$ to be time varying. The bounds on the rate of change of this offset are known and given by $\theta^a_t - \theta^a_{t-1} = \Delta \theta^a_t \in \mathcal{P}$,~for all $t \geq 0$, where the set
\begin{align}\label{eq:d_bounds}
    \mathcal{P} = \{\Delta \theta^a \in \mathbb{R}^p : K^\theta \Delta \theta^a \leq l^\theta, K^\theta \in \mathbb{R}^{r_\theta \times p}, l^\theta \in \mathbb{R}^{r_\theta}\}.
\end{align}
\end{assumption}

\begin{assumption} \label{ass:omega_b}
We also assume that the true offset $\theta^a_t$ lies within a known and nonempty polytope $\Omega$ at all times, which contains zero in its interior. That is,
\begin{align}\label{eq:Omega}
   \theta^a_t \in \Omega,~\forall t \geq 0,~\text{where,~} \Omega = \{\theta:\mathcal{H}^{\theta}_{0}\theta \leq h^{\theta}_{0}\}, 
\end{align}
for matrices  $\mathcal{H}^\theta_0 \in \mathbb{R}^{r_0 \times p}$ and ${h}^\theta_0 \in \mathbb{R}^{r_0}$. 
\end{assumption}

\subsection{Constraints}
In this paper we study two different cases of constraint satisfaction, namely $(i)$ robust constraint on states and hard constraints on inputs, and $(ii)$ chance constraints on states and hard constraints on inputs. We define $C \in \mathbb{R}^{s \times n}$, $G \in \mathbb{R}^{s \times n}$, $D \in \mathbb{R}^{s \times m}$, $b \in \mathbb{R}^s$, $h \in \mathbb{R}^s$, $H_u \in \mathbb{R}^{o \times m}$ and $h_u \in \mathbb{R}^o$. We can then write the constraints $\forall t \geq 0$ as:
\begin{subequations} \label{eq:constraints_nominal}
\begin{align}
    \mathbb{Z}_1 & := \{(x,u): C x + D u \leq b\},\label{eq: con_robust}\\
	\mathbb{Z}_2 & := \{(x,u): \mathbb{P}(Gx 
	\leq h)  \geq 1-\alpha,~ {H}_u u \leq h_u \} \label{eq:chance_con_com},
\end{align}
\end{subequations}
where $\alpha \in (0,1)$ is the admissible probability of constraint violation. We assume the above state and input constraint sets are compact and they contain the origin. This assumption is key for the stability proof in Section~\ref{sec:feasandstab}.

\subsection{Infinite Horizon Optimization Problems}
Our goal is to design controllers that solve two infinite horizon optimal control problems, one with constraints $\mathbb{Z}_1$ and the other one with $\mathbb{Z}_2$. They are defined as follows:
\begin{equation}\label{eq:OP_inf_R}
	\begin{array}{clll}
		\hspace{0cm} 
	\displaystyle\min_{u_0,u_1(\cdot),\ldots} & \displaystyle\sum\limits_{t\geq0} \ell \left(\bar{x}_t, \bar{u}_t \right) \\[1ex]
		\text{s.t.}  & x_{t+1} = Ax_t + Bu_t + E\theta^a_t +w_t,\\[1ex]
		& \bar{x}_{t+1} = A\bar{x}_t + B \bar{u}_t + E \theta^a_t,\\[1ex]
		& C x_t + D u_t \leq b,\ \forall w_t \in \mathbb W,\ \\[1ex]
		&  x_0 =  x_S, \ \bar{x}_0 =  x_S,\\ & t=0,1,\ldots,
	\end{array}\tag{P1}
\end{equation}
and
\begin{equation}\label{eq:OP_inf_S}
	\begin{array}{clll}
		\hspace{0cm} 
	\displaystyle\min_{u_0,u_1(\cdot),\ldots} & \displaystyle\sum\limits_{t\geq0} \ell \left(\bar{x}_t, \bar{u}_t \right) \\[1ex]
		\text{s.t.}  & x_{t+1} = Ax_t + Bu_t + E\theta^a_t +w_t,\\[1ex]
		& \bar{x}_{t+1} = A\bar{x}_t + B\bar{u}_t + E \theta^a_t,\\[1ex]
		& \mathbb{P}(Gx_t 
	\leq h)  \geq 1-\alpha, \\[1ex]
	& H_u u_t \leq h_u,\ \\[1ex]
	&  x_0 =  x_S, \ \bar{x}_0 =  x_S,\\ & t=0,1,\ldots,
	\end{array}\tag{P2}
\end{equation}
where $\theta^a_t$ is the time varying offset present in the system, $\bar{x}_t$ is the disturbance-free nominal state and $\bar{u}_t = {u}_t(\bar{x}_t)\in \mathbb{R}^m $ denotes the corresponding nominal input. The nominal state is utilized to obtain the nominal cost, which is minimized in optimization problems \eqref{eq:OP_inf_R} and \eqref{eq:OP_inf_S}. We point out that, as system \eqref{eq:unc_system} is uncertain, the optimal control problems \eqref{eq:OP_inf_R} and \eqref{eq:OP_inf_S} consist of finding input policies $[u_0,u_1(\cdot),u_2(\cdot),\ldots]$, where $u_t: \mathbb{R}^{n}\ni x_t \mapsto u_t = u_t(x_t)\in \mathbb{R}^m$ are feedback policies. 
We approximate solutions to optimization problems \eqref{eq:OP_inf_R} and \eqref{eq:OP_inf_S} by solving corresponding finite time constrained optimal control problems in a receding horizon fashion. 

In this paper, we assume that the offset $\theta^a_t$ in \eqref{eq:OP_inf_R} and \eqref{eq:OP_inf_S} is \emph{not known exactly}. Therefore, we propose a parameter estimation framework to refine our knowledge of $\theta^a_t$ as more data is collected, thus introducing \emph{adaptation}. 

\section{Uncertainty Adaptation}\label{sec:offset_adap}
The domain of feasible offset $\theta^a_t$ is denoted by $\Theta_t$ at time step $t$, and is called the \emph{Feasible Parameter Set} \cite{tanaskovic2014adaptive}. The goal is to ensure that constraints \eqref{eq: con_robust} and \eqref{eq:chance_con_com} are satisfied for all $\theta_t \in {
\Theta}_t$. This guarantees constraint satisfaction in presence of the true unknown offset $\theta^a_t \in \Theta_t$. 
Our initial estimate for $\Theta_0$ is $\Omega$ from Assumption~\ref{ass:omega_b}, i.e., $\Theta_0 = \Omega$. The Feasible Parameter Set is then adapted at every time-step as new measurements are available, utilizing Assumption~\ref{ass:rate_theta} and Assumption~\ref{ass:omega_b}. Based on only the measurements at time step $t$, we denote the potential domain of the feasible offset at time step $t$, $\mathcal{S}^t_t$ as: 
\begin{align*}
    \mathcal{S}^t_t= ~\{\theta_t \in \mathbb{R}^p: -\bar{w}+\ubar \nu \leq -x_t + Ax_{t-1} + Bu_{t-1} + E\theta_t  \leq \bar{w} + \bar{\nu}\},
\end{align*}
where bounds $\bar{w}$ are given by \eqref{eq:w_bounds}, and from \eqref{eq:d_bounds}, we apply: 
\begin{equation}\label{eq:vbar_updates}
\begin{aligned}
    & \ubar \nu = \min_\nu \{ E\nu : K^\theta \nu\leq l^\theta\},\\
    & \bar \nu = \max_\nu \{ E\nu :  K^\theta \nu\leq l^\theta\}.
\end{aligned}
\end{equation}
Now, for any $q\leq t$, the feasible set of offsets for time step $t$, based on information until time step $q$, is obtained as: 
\begin{align*}
    \mathcal{S}^q_t= ~\{\theta_t \in \mathbb{R}^p: -\bar{w}+(t-q+1)\ubar \nu \leq -x_q + Ax_{q-1} + Bu_{q-1} + E\theta_t \leq \bar{w} + (t-q+1)\bar{\nu}\},
 \end{align*}
Using all the above information until time step $t$, we obtain the Feasible Parameter Set at time step $t$, as:
\begin{align*}
\Theta_t = \Omega \cap \big(\bigcap \limits_ {q=1,2,\dots,t} \mathcal{S}^q_t\big).
\end{align*}
The above Feasible Parameter Set at time step $t$ can be written as:
\begin{align}\label{eq:fps_matrix}
    \Theta_t = \{\theta_t \in \mathbb{R}^p: \mathcal{H}^{\theta}_t \theta_t \leq h^{\theta}_t\},
\end{align}
where $\mathcal{H}^\theta_t \in \mathbb{R}^{r_t \times p}$ and $h^\theta_t \in \mathbb{R}^{r_t}$, $r_t = r_0 + 2t$ is the number of facets in the Feasible Parameter Set polytope $\Theta_t$ at any given $t$. As new data is obtained at the next time step $(t+1)$, it can be proven that \cite{2017arXiv171207548T}:
\begin{equation}\label{eq:fps_cl_update}
\begin{aligned}
    \mathcal{H}^{\theta}_{t+1} & = [ (\mathcal{H}^{\theta}_t)^\top, -E^\top, +E^\top]^\top \in \mathbb{R}^{r_{t+1} \times p},\\ 
    h^{\theta}_{t+1} & = \begin{bmatrix} h^{\theta}_t + \Delta h^{\theta}_t \\ -x_{t+1}+Ax_t+Bu_t+\bar{w}-\ubar{\nu} \\ x_{t+1}-Ax_t-Bu_t+\bar{w}+\bar{\nu} \end{bmatrix} \in \mathbb{R}^{r_{t+1}},\\
    \Delta h^{\theta}_t & = \begin{bmatrix} \mathbf{0}_{r_0}^\top,-\ubar \nu^\top, \bar{\nu}^\top, \dots, -\ubar \nu^\top, \bar{\nu}^\top \end{bmatrix}^\top \in \mathbb{R}^{r_t}.
\end{aligned}
\end{equation}

\begin{proposition}
Assume that \eqref{eq:w_bounds} and Assumption~\ref{ass:rate_theta} hold. Then the Feasible Parameter Set  obtained using \eqref{eq:fps_matrix}--\eqref{eq:fps_cl_update} is nonempty and contains the true offset at all times, i.e, $\Theta_t\neq\emptyset$ and $\theta^a_t \in \Theta_t$ for all $t \geq 0$.
\end{proposition}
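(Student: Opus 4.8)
The plan is to establish the containment $\theta^a_t \in \Theta_t$ directly; nonemptiness then follows immediately, since a set containing the true offset cannot be empty. By the definition $\Theta_t = \Omega \cap \big(\bigcap_{q=1}^t \mathcal{S}^q_t\big)$, it suffices to verify two things: that $\theta^a_t \in \Omega$, and that $\theta^a_t \in \mathcal{S}^q_t$ for every $q \in \{1,\dots,t\}$. The first is exactly Assumption~\ref{ass:omega_b}, so the real work lies in the second.

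Fix $q \in \{1,\dots,t\}$. First I would insert the system dynamics \eqref{eq:unc_system} at time $q-1$, namely $x_q = Ax_{q-1} + Bu_{q-1} + E\theta^a_{q-1} + w_{q-1}$, to rewrite the quantity appearing in the definition of $\mathcal{S}^q_t$ evaluated at $\theta_t = \theta^a_t$:
\begin{equation*}
-x_q + Ax_{q-1} + Bu_{q-1} + E\theta^a_t = E\big(\theta^a_t - \theta^a_{q-1}\big) - w_{q-1}.
\end{equation*}
The goal then reduces to showing that this expression lies in the box $\big[-\bar{w} + (t-q+1)\ubar{\nu},\; \bar{w} + (t-q+1)\bar{\nu}\big]$.

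The central step is a telescoping decomposition: $\theta^a_t - \theta^a_{q-1} = \sum_{s=q}^{t} \Delta\theta^a_s$, a sum of exactly $(t-q+1)$ rate increments, each satisfying $\Delta\theta^a_s \in \mathcal{P}$ by Assumption~\ref{ass:rate_theta}. By the definition \eqref{eq:vbar_updates} of $\ubar{\nu}$ and $\bar{\nu}$ as the component-wise minimum and maximum of $E\nu$ over the feasible increment set, every increment obeys $\ubar{\nu} \leq E\Delta\theta^a_s \leq \bar{\nu}$; summing over the $(t-q+1)$ terms yields $(t-q+1)\ubar{\nu} \leq E\big(\theta^a_t - \theta^a_{q-1}\big) \leq (t-q+1)\bar{\nu}$. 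Adding the noise contribution, which obeys $-\bar{w} \leq -w_{q-1} \leq \bar{w}$ by \eqref{eq:w_bounds}, reproduces precisely the two-sided bound defining $\mathcal{S}^q_t$. Hence $\theta^a_t \in \mathcal{S}^q_t$ for all $q$, and combined with $\theta^a_t \in \Omega$ this gives $\theta^a_t \in \Theta_t$.

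I expect the only point requiring care — rather than a genuine obstacle — to be the bookkeeping of the scaling factor: one must check that the number of accumulated increments, $(t-q+1)$, matches the multiplier attached to $\ubar{\nu}$ and $\bar{\nu}$ in $\mathcal{S}^q_t$, and that the min and max in \eqref{eq:vbar_updates} are read component-wise, so that the box containment $\ubar{\nu} \leq E\nu \leq \bar{\nu}$ holds simultaneously in every coordinate. Since the argument verifies membership for an arbitrary fixed $t$ and relies only on the explicit definition of $\Theta_t$, no induction is needed; the recursive form \eqref{eq:fps_cl_update} is merely a computational restatement of the same intersection.
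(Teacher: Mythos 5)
Your proof is correct, and it takes a genuinely different route from the paper's. The paper argues by induction on $t$, working directly with the halfspace-representation recursion \eqref{eq:fps_cl_update}: assuming $\mathcal{H}^\theta_t \theta^a_t \leq h^\theta_t$, it checks that the rows inherited from $\Omega$ still hold at $\theta^a_{t+1}$ by Assumption~\ref{ass:omega_b}, that the measurement rows hold at $\theta^a_{t+1}$ once relaxed by $\Delta h^\theta_t$ (because $\theta^a_{t+1}-\theta^a_t \in \mathcal{P}$ gives $\ubar{\nu} \leq E(\theta^a_{t+1}-\theta^a_t) \leq \bar{\nu}$), and that the two newly appended rows hold by the dynamics \eqref{eq:unc_system} and the noise bound \eqref{eq:w_bounds}. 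You instead verify membership non-inductively in the intersection form $\Theta_t = \Omega \cap \bigcap_{q=1}^{t} \mathcal{S}^q_t$, using the telescoping identity $\theta^a_t - \theta^a_{q-1} = \sum_{s=q}^{t}\Delta\theta^a_s$. Your version is more transparent: it makes visible why the multiplier on $\ubar{\nu},\bar{\nu}$ in $\mathcal{S}^q_t$ is exactly the number $(t-q+1)$ of elapsed drift steps, a fact the paper's induction keeps implicit by applying one relaxation per time step. What the paper's induction buys in exchange is that it certifies the implemented update \eqref{eq:fps_cl_update} itself, with no appeal to its equivalence with the intersection; your closing remark that the recursion is merely a computational restatement of $\Omega \cap \bigcap_{q} \mathcal{S}^q_t$ is true and is asserted in the paper's text preceding the proposition, but a from-scratch verification of that equivalence would require essentially the same per-step bookkeeping the paper's induction performs. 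Since the proposition is stated for the set obtained via \eqref{eq:fps_matrix}--\eqref{eq:fps_cl_update}, this reliance is worth acknowledging explicitly, though it is a presentational dependency on the paper's setup rather than a gap in your argument.
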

\begin{proof}
See Appendix.
\end{proof}

\section{Adaptive Robust MPC} \label{sec:robust_MPC}
In this section we present formulation of the Adaptive Robust MPC algorithm. We approximate the solutions to the infinite horizon optimal control problem \eqref{eq:OP_inf_R} by solving a finite horizon problem in a receding horizon fashion.

\subsection{Robust  MPC Problem}
The MPC controller has to solve the following finite horizon robust optimal control problem at each time step:
\begin{equation}\label{MPC_R_intrac} 
	\begin{aligned}
		\min_{U_t(\cdot)} & \sum_{k=t}^{t+N-1} \ell(\bar{x}_{k|t}, \bar{u}_{k|t}) + Q(\bar{x}_{t+N|t}) \\
		~~\text{s.t.} &~~  x_{k+1|t} = Ax_{k|t} + Bu_{k|t} + E\theta_{k|t} + w_{k|t},\\
		& ~~\bar{x}_{k+1|t} = A\bar{x}_{k|t} + B \bar{u}_{k|t} + E \bar{\theta}_t,\\
		&~~C x_{k|t} + D u_{k|t} \leq b,\\
		&~~x_{t+N|t} \in \mathcal{X}^R_N,\\
		&~~ \forall \theta_{k|t} \in \Theta_{k|t},~\forall w_{k|t} \in \mathbb{W},\\
		&~~\forall k = \{t,\ldots,t+N-1\},\\
	    &~~ x_{t|t} = x_t, \ \bar{x}_{t|t} = x_t, \bar{\theta}_t \in \Omega,
	\end{aligned}
\end{equation}
where $x_t$ is the measured state at time step $t$, $x_{k|t}$ is the prediction of state at time step $k$, obtained by applying predicted input policies $[u_{t|t},\dots,u_{k-1|t}]$ to system~\eqref{eq:unc_system}, and $\{\bar{x}_{k|t}, \bar{u}_{k|t}\}$ with $\bar{u}_{k|t} = u_{k|t}(\bar{x}_{k|t})$ denote the disturbance-free nominal state and corresponding input respectively. We use a nominal point estimate of offset, $\bar{\theta}_t \in \Omega$ to propagate the nominal trajectory. The predicted Feasible Parameter Sets $\Theta_{k|t}$ are elaborated in the following section. Notice, the above minimizes the nominal cost, comprising of positive definite stage cost $\ell(\cdot, \cdot)$ and terminal cost $Q(\cdot)$ functions. The terminal constraint $\mathcal{X}^R_N$ and terminal cost $Q(\cdot)$ are introduced to ensure feasibility and stability properties of the MPC controller \cite{mayne2000constrained, borrelli2017predictive}, as we highlight in Section~\ref{sec:feasandstab}.   

\begin{remark}\label{rem:nom_off}
One may design point estimates $\bar{\theta}_t$ of the offset for performance improvement, i.e, lower cost in \eqref{MPC_R_intrac}. Following \cite{lorenzenAutomaticaAMPC}, one option is to construct the nominal offset estimate $\bar{\theta}_t$ recursively with Least Mean Square filter as
\begin{subequations}\label{eq:theta_loren}
    \begin{align}
        & \tilde{\theta}_t = \bar{\theta}_{t-1} + \mu E^\top (x_t - \bar{x}_{t|t-1}), \\
        & \bar{\theta}_t = \mathrm{Proj}_{\Omega} (\tilde{\theta}_t), \label{eq:proj_theta}
    \end{align}
\end{subequations}
where $\mathrm{Proj}(\cdot)$ denotes the Euclidean projection operator, and scalar $\mu \in \R$ can be chosen such that $\frac{1}{\mu} > \Vert E \Vert ^2$. 
\begin{proposition}\label{prop:offset_error}
If $\sup_{t \geq 0} \Vert x_t \Vert < \infty$ and $\sup_{t \geq 0} \Vert u_t \Vert < \infty$, then $\bar{\theta}_t \in \Omega$ and 
\begin{align}\label{eq:pred_err}
    \sup_{\tilde{m} \in \mathbb{N}, w_t \in \mathbb{W}, \bar{\theta}_0 \in \Omega} \frac{\sum \limits_{t=0}^{\tilde{m}} \Vert \tilde{x}_{t+1|t} \Vert^2 }{\frac{1}{\mu} \Vert \bar{\theta}_0 - \theta_0^a \Vert^2 + \sum \limits_{t=0}^{\tilde{m}} \Vert w_t \Vert^2 } \leq 1,
\end{align}
where $\tilde{x}_{t+1|t} = Ax_t + Bu_t - \bar{x}_{t+1|t}$ is the one step prediction error, ignoring the effect of $w_t$ in closed-loop.
\end{proposition}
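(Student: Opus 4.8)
The plan is to recognize \eqref{eq:theta_loren} as a projected normalized least-mean-square filter and to establish \eqref{eq:pred_err} as the standard finite-gain ($\ell_2$-to-$\ell_2$) passivity estimate for such a filter, via a telescoping energy argument. The inclusion $\bar{\theta}_t \in \Omega$ is immediate: $\bar{\theta}_t = \mathrm{Proj}_\Omega(\tilde{\theta}_t)$ and $\Omega$ is a nonempty closed polytope, so the iterate lies in $\Omega$ for every $t$ irrespective of the data; the standing boundedness of $x_t,u_t$ only serves to guarantee that the residuals below are finite and well defined. For the gain bound I would first fix notation for the one-step errors. With $\bar{x}_{t+1|t} = Ax_t + Bu_t + E\bar{\theta}_t$ the nominal one-step prediction, the noise-free prediction error is $\tilde{x}_{t+1|t} = E(\theta^a_t - \bar{\theta}_t)$, whereas the measured residual driving the update is $\epsilon_{t+1} := x_{t+1} - \bar{x}_{t+1|t} = \tilde{x}_{t+1|t} + w_t$, so that $E^\top(x_{t+1}-\bar{x}_{t+1|t})$ in \eqref{eq:theta_loren} equals $E^\top \epsilon_{t+1}$. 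Consistent with the constant true parameter underlying the right-hand side of \eqref{eq:pred_err} (which contains only $\bar{\theta}_0-\theta^a_0$ and the noise, with no rate-of-change term), I would treat the true offset as the fixed vector $\theta^a := \theta^a_0 \in \Omega$ throughout this estimate.

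Next I would introduce the energy $V_t := \frac{1}{\mu}\norm{\bar{\theta}_t - \theta^a}^2 \geq 0$ and prove the per-step dissipation inequality $V_{t+1} \leq V_t - \norm{\tilde{x}_{t+1|t}}^2 + \norm{w_t}^2$. Two ingredients do the work. First, since $\theta^a \in \Omega$ and Euclidean projection onto a convex set is non-expansive, $\norm{\bar{\theta}_{t+1} - \theta^a} = \norm{\mathrm{Proj}_\Omega(\tilde{\theta}_{t+1}) - \mathrm{Proj}_\Omega(\theta^a)} \leq \norm{\tilde{\theta}_{t+1} - \theta^a}$, which lets me discard the projection and work with the linear update $\tilde{\theta}_{t+1} = \bar{\theta}_t + \mu E^\top \epsilon_{t+1}$. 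Second, expanding $\norm{\tilde{\theta}_{t+1}-\theta^a}^2 = \norm{\bar{\theta}_t - \theta^a}^2 + 2\mu(\bar{\theta}_t-\theta^a)^\top E^\top \epsilon_{t+1} + \mu^2 \norm{E^\top \epsilon_{t+1}}^2$ and substituting $E(\bar{\theta}_t-\theta^a) = -\tilde{x}_{t+1|t} = -(\epsilon_{t+1}-w_t)$ rewrites the cross term as $2\mu(-\norm{\epsilon_{t+1}}^2 + w_t^\top \epsilon_{t+1})$. Using the step-size condition $\tfrac{1}{\mu} > \norm{E}^2$, equivalently $\mu^2\norm{E}^2 \leq \mu$, to dominate the quadratic term $\mu^2\norm{E^\top \epsilon_{t+1}}^2 \leq \mu\norm{\epsilon_{t+1}}^2$, the three contributions collapse to $-\mu\norm{\epsilon_{t+1}}^2 + 2\mu w_t^\top \epsilon_{t+1}$, and completing the square via $\epsilon_{t+1}-w_t = \tilde{x}_{t+1|t}$ gives $-\mu\norm{\tilde{x}_{t+1|t}}^2 + \mu\norm{w_t}^2$. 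Dividing by $\mu$ and combining with the non-expansiveness bound yields exactly the claimed per-step inequality.

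Finally I would telescope this inequality from $t=0$ to $t=\tilde{m}$ and drop the nonnegative terminal term $V_{\tilde{m}+1} \geq 0$, obtaining $\sum_{t=0}^{\tilde{m}} \norm{\tilde{x}_{t+1|t}}^2 \leq V_0 + \sum_{t=0}^{\tilde{m}} \norm{w_t}^2 = \frac{1}{\mu}\norm{\bar{\theta}_0 - \theta^a_0}^2 + \sum_{t=0}^{\tilde{m}} \norm{w_t}^2$; since this holds for every horizon $\tilde{m}$, every noise sequence, and every initialization, dividing through gives the supremum bound $\leq 1$ in \eqref{eq:pred_err}. The main obstacle is the algebra in the middle step, namely correctly exploiting $\tfrac{1}{\mu}>\norm{E}^2$ to absorb the squared-gradient term and then completing the square so that the indefinite cross term $w_t^\top\epsilon_{t+1}$ is converted cleanly into $\norm{\tilde{x}_{t+1|t}}^2$ and $\norm{w_t}^2$. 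A secondary modeling point worth flagging is that this clean telescoping presumes a fixed true parameter: if $\theta^a_t$ genuinely varied, the triangle inequality in the non-expansiveness step would introduce an extra $\norm{\theta^a_{t+1}-\theta^a_t}$ contribution, so the stated bound is the time-invariant specialization of the estimator property.
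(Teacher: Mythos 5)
Your per-step dissipation argument is essentially the paper's own: both proofs use non-expansiveness of $\mathrm{Proj}_\Omega$ to discard the projection, expand the squared error of the unprojected update, invoke $\tfrac{1}{\mu} > \norm{E}^2$ to absorb the quadratic term, and telescope from $0$ to $\tilde{m}$. Your algebra is correct, and your completion of the square is the same computation the paper writes as $(\mu\norm{E}^2 - 1)\norm{\tilde{x}_{t+1|t}+w_t}^2 - \norm{\tilde{x}_{t+1|t}}^2 + \norm{w_t}^2$. The one substantive divergence is exactly the point you flag at the end: the paper does \emph{not} specialize to a constant parameter. Consistent with Assumption~\ref{ass:rate_theta}, it tracks the time-varying energy $\tfrac{1}{\mu}\norm{\bar{\theta}_t - \theta^a_t}^2$, which generates the additional cross term $\tfrac{2}{\mu}\tilde{\theta}_{t+1}^\top(\theta^a_t - \theta^a_{t+1})$; the paper bounds this via Cauchy--Schwarz by the constant $\tfrac{2}{\mu}\omega_M p_M$, with $\omega_M = \sup_{\omega\in\Omega}\norm{\omega}$ and $p_M = \sup_{p\in\mathcal{P}}\norm{p}$, and then silently discards this nonnegative quantity in the final inequality of the chain. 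That discard is not justified, and it is precisely the weakness your closing remark anticipates: with genuine drift, the clean per-step bound $V_{t+1}-V_t \leq -\norm{\tilde{x}_{t+1|t}}^2 + \norm{w_t}^2$ is not available, and the stated ratio bound can in fact fail (take $\bar{\theta}_0 = \theta^a_0$ and $w_t \equiv 0$ with nonzero drift: the denominator of \eqref{eq:pred_err} vanishes while the numerator does not).

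So, measured against the proposition as the paper intends it (time-varying $\theta^a_t$ per Assumption~\ref{ass:rate_theta}), your proof establishes only the time-invariant specialization, which is a gap in scope; but this gap cannot be closed by the telescoping argument — yours or the paper's — because without an extra drift term on the right-hand side the general statement is false. In short, your proposal is the rigorous core of the result, and the paper's additional generality rests on an unjustified step; a fully correct time-varying version would need the denominator of \eqref{eq:pred_err} augmented by a term proportional to the accumulated parameter variation.
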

\begin{proof}
See Appendix. 
\end{proof}
With bound \eqref{eq:pred_err} on prediction error, finite gain $\ell_2$ stability of the resulting MPC algorithm can be trivially proven by following \cite[Theorem~14]{lorenzenAutomaticaAMPC}, \cite[Theorem~3.2]{kouvaritakis2016model}. However, since we only focus on the robust constraint satisfaction aspect of \eqref{MPC_R_intrac}, we will use the nominal offset estimate $\bar{\theta}_t = \mathbf{0}_{p \times 1}$ for all $t \geq 0$ in the subsequent sections. 
\end{remark}

\subsection{Predicted Feasible Parameter Sets}\label{sec:OL_fps}
These \emph{Predicted Feasible Parameter Sets} are constructed along an MPC horizon at time step $t$, when the measurement at next time step $(t+1)$ is yet to be available. 
\begin{definition}(Predicted Feasible Parameter Sets)
The \emph{Predicted Feasible Parameter Sets} at any time step $t$, are the predicted feasible domains of the true offset $\theta^a$ over a prediction horizon of length $N$, based on the information until time step $t$. These sets are denoted as $\Theta_{k|t}=\{\theta  \in \mathbb{R}^{p}: \mathcal{H}^\theta_{k|t} \theta \leq h^\theta_{k|t}\}$ for all $k \in \{t,t+1,\dots,t+N-2\}$, where
\begin{subequations}\label{eq:ol_fps}
\begin{align}
    \mathcal{H}^{\theta}_{k+1|t} & = \mathcal{H}^{\theta}_{k|t} \in \mathbb{R}^{r_t \times p},\\
    h^{\theta}_{k+1|t} & = h^{\theta}_{k|t} + \begin{bmatrix} \mathbf{0}_{r_0}^\top,-\ubar \nu^\top, \bar{\nu}^\top,.., -\ubar \nu^\top, \bar{\nu}^\top \end{bmatrix}^\top,
\end{align}
\end{subequations}
with the terminal condition, 
\begin{align}\label{eq:fps_termc}
    \Theta_{t+N|t} = \Omega,
\end{align}
where $\Omega$ is defined in Assumption~\ref{ass:omega_b}.
\end{definition}
In principle, the Predicted Feasible Parameter Sets in \eqref{eq:ol_fps} are formed after measuring $x_t$ at any time step $t$, and expanding the obtained (from \eqref{eq:fps_matrix}) Feasible Parameter Set $\Theta_t$ over the entire horizon of length $N$, incorporating parameter rate bounds \eqref{eq:d_bounds}. 

\begin{proposition} 
The Predicted Feasible Parameter Sets satisfy the property $\Theta_{k|t+1} \subseteq \Theta_{k|t}$, for all $k \in \{t+1,t+2,\dots,t+N\}$.
\end{proposition}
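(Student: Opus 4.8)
The plan is to prove the inclusion directly on the halfspace representations, exploiting the fact that the predicted sets rooted at $t$ and at $t+1$ use the \emph{same} per-step expansion increment and differ only through the extra constraints generated by the measurement $x_{t+1}$ in \eqref{eq:fps_cl_update}. First I would unroll the open-loop recursion \eqref{eq:ol_fps}. The normal matrix is frozen along the horizon and the right-hand side is incremented by a constant vector at each step, so telescoping gives, for $k\in\{t,\dots,t+N-1\}$,
\begin{equation*}
\mathcal{H}^{\theta}_{k|t}=\mathcal{H}^{\theta}_{t},\qquad h^{\theta}_{k|t}=h^{\theta}_{t}+(k-t)\,\Delta h^{\theta}_{t},
\end{equation*}
and, analogously, $\mathcal{H}^{\theta}_{k|t+1}=\mathcal{H}^{\theta}_{t+1}$ with $h^{\theta}_{k|t+1}=h^{\theta}_{t+1}+(k-t-1)\,\Delta h^{\theta}_{t+1}$ for $k\in\{t+1,\dots,t+N\}$, where $\Delta h^{\theta}_{t}$ is the increment vector in \eqref{eq:fps_cl_update}.

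Next I would treat the interior indices $k\in\{t+1,\dots,t+N-1\}$, where both sets come from the open-loop recursion. From the closed-loop update \eqref{eq:fps_cl_update}, $\mathcal{H}^{\theta}_{t+1}$ is $\mathcal{H}^{\theta}_{t}$ stacked with the two measurement blocks $-E^\top,E^\top$, so its first $r_t$ rows coincide with $\mathcal{H}^{\theta}_{t}$. Moreover the first $r_t$ entries of $h^{\theta}_{t+1}$ equal $h^{\theta}_{t}+\Delta h^{\theta}_{t}$, and by the nested structure of the increment (the entries $[\mathbf{0}_{r_0}^\top,-\ubar\nu^\top,\bar\nu^\top,\dots]^\top$ for $t+1$ repetitions restrict to those for $t$ repetitions) the first $r_t$ entries of $\Delta h^{\theta}_{t+1}$ equal $\Delta h^{\theta}_{t}$. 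Substituting into the closed forms, the first $r_t$ entries of $h^{\theta}_{k|t+1}$ equal $(h^{\theta}_{t}+\Delta h^{\theta}_{t})+(k-t-1)\Delta h^{\theta}_{t}=h^{\theta}_{t}+(k-t)\Delta h^{\theta}_{t}=h^{\theta}_{k|t}$. Hence $\Theta_{k|t+1}$ is precisely $\Theta_{k|t}$ with the two measurement halfspaces appended, which yields $\Theta_{k|t+1}\subseteq\Theta_{k|t}$.

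It remains to handle the terminal index $k=t+N$, where $\Theta_{t+N|t}=\Omega$ by \eqref{eq:fps_termc}, whereas $\Theta_{t+N|t+1}$ is still given by the open-loop recursion (the $(N-1)$-th step of the horizon rooted at $t+1$). Since the $\Omega$-rows (the first $r_0$ entries) of the increment vanish and the update \eqref{eq:fps_cl_update} leaves these rows untouched, the first $r_0$ rows of $h^{\theta}_{t+N|t+1}$ remain $h^{\theta}_{0}$ with normals $\mathcal{H}^{\theta}_{0}$; thus every $\theta\in\Theta_{t+N|t+1}$ obeys $\mathcal{H}^{\theta}_{0}\theta\leq h^{\theta}_{0}$, i.e.\ $\Theta_{t+N|t+1}\subseteq\Omega=\Theta_{t+N|t}$. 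The main bookkeeping obstacle is reconciling the two indexings: advancing the root from $t$ to $t+1$ performs one expansion step \emph{and} appends the measurement constraints simultaneously, while the open-loop recursion performs expansions only. The argument hinges on the increment vectors being nested so that this extra expansion step is exactly absorbed, leaving the measurement halfspaces as the sole, set-shrinking, difference; and on carefully matching the terminal reset to $\Omega$ at $k=t+N$ against the interior open-loop set produced from $t+1$.
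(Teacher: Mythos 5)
Your proposal is correct and follows essentially the same route as the paper's proof: a row-by-row comparison of the halfspace representations showing that the description of $\Theta_{k|t+1}$ consists of exactly the inequalities defining $\Theta_{k|t}$ plus appended measurement halfspaces, with the terminal index $k=t+N$ handled via the untouched $\Omega$-rows (zero increment on the first $r_0$ entries). If anything, your explicit telescoping and the nested-increment observation make the bookkeeping more precise than the paper's own write-up, which leaves the expansion of the measurement rows implicit, but the containment argument is identical.
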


\begin{proof}
See Appendix.
\end{proof}

\subsection{Control Policy}\label{sec:control_param}

Note that optimizing over policies $U(\cdot)$ in \eqref{MPC_R_intrac} is an intractable problem, as it involves searching over an infinite dimensional function space. Therefore, we restrict ourselves to the affine disturbance feedback parametrization \cite{Goulart2006,lofberg2003minimax} for control synthesis. For all $k \in \{t,\dots,t+N-1\}$ over the MPC horizon (of length $N$), the control policy is given as:
\begin{equation}\label{eq:inputParam_DF_OL}
	u_{k|t}(x_{k|t}): u_{k|t} = \sum \limits_{j=t}^{k-1}M_{k,j|t} (w_{j|t} + E\theta_{j|t})  + v_{k|t},
\end{equation}
where $M_{k|t}$ are the \emph{planned} feedback gains at time step $t$ and $v_{k|t}$ are the auxiliary inputs. Let us define $\mathbf{w}_t = [w^\top_{t|t},\dots,w^\top_{t+N-1|t}]^\top \in \mathbb{R}^{nN}$, $\pmb{\theta}_t = [\theta^\top_{t|t},\dots,\theta^\top_{t+N-1|t}]^\top \in \mathbb{R}^{pN}$ and \textbf{E} = $\textnormal{diag}(E,E,
\dots,E) \in \mathbb{R}^{nN \times pN}$. Then the sequence of predicted inputs from \eqref{eq:inputParam_DF_OL} can be stacked together and compactly written as $\mathbf{u}_t = \mathbf{M}_t(\mathbf{w}_t + \mathbf{E}\pmb{\theta}_t) + \mathbf{v}_t$ at any time step $t$, where $\mathbf{M}_t\in \mathbb{R}^{mN \times nN}$ and $\mathbf{v}_t \in \mathbb{R}^{mN}$ are:
\begin{equation*}
\begin{aligned}
  & \mathbf{M}_t  = \begin{bmatrix}0& \cdots&\cdots&0\\
  M_{t+1,t}& 0 & \cdots & 0\\
  \vdots &\ddots& \ddots &\vdots\\
  M_{t+N-1,t}& \cdots& M_{t+N-1,t+N-2}& 0
  \end{bmatrix}, \\ 
  & \mathbf{v}_t = [v_{t|t}^\top, \cdots, \cdots, v_{t+N-1|t}^\top]^\top.
\end{aligned}
\end{equation*} 

\subsection{Tractable Reformulation}
Using Section~\ref{sec:OL_fps} and Section~\ref{sec:control_param}, we solve the following tractable reformulation of robust MPC problem \eqref{MPC_R_intrac}: 
\begin{equation} \label{eq:MPC_R_fin}
	\begin{aligned}
	  J^\star_R&(t,x_t)  :=	\\
	& \min_{\mathbf{M}_t, \mathbf{v}_t} ~~ \sum_{k=t}^{t+N-1} \ell(\bar{x}_{k|t}, v_{k|t}) + Q(\bar{x}_{t+N|t}) \\
		& ~~~\text{s.t.}~~~~~    x_{k+1|t} = Ax_{k|t} + Bu_{k|t} + E\theta_{k|t} + w_{k|t},\\
		& ~~~~~~~~~~~~\bar{x}_{k+1|t} = A\bar{x}_{k|t} + Bv_{k|t},\\
		&~~~~~~~~~~~~u_{k|t} = \sum \limits_{j=t}^{k-1}M_{k,j|t} (w_{j|t} + E\theta_{j|t})  + v_{k|t},\\
		&~~~~~~~~~~~~C x_{k|t} + D u_{k|t} \leq b,\\
	    &~~~~~~~~~~~~x_{t+N|t} \in \mathcal{X}^R_N,\\
	    & ~~~~~~~~~~~~ \forall \theta_{k|t} \in \Theta_{k|t},~\forall w_{k|t} \in \mathbb{W},\\
        &~~~~~~~~~~~~ \forall k = \{t,\ldots,t+N-1\},\\
        	&~~~~~~~~~~~~x_{t|t} = x_t, \ \bar{x}_{t|t} = x_t.
	\end{aligned}
\end{equation}
We use state feedback to construct terminal set $\mathcal{X}^R_N = \{x \in \mathbb{R}^n: Y_R x \leq z_R,~Y_R \in \mathbb{R}^{r_R \times n},~z_R \in \mathbb{R}^{r_R}\}$, which is the maximal robust positive invariant set \cite{kolmanovsky1998theory} obtained with a state feedback controller $u=Kx$, dynamics \eqref{eq:unc_system} and constraints \eqref{eq: con_robust}. This set has the properties:
\begin{equation}\label{eq:term_set_DF}
    \begin{aligned}
    &\mathcal{X}^R_N \subseteq \{x|(x,Kx) \in \mathbb{Z}_1\},\\
    &(A+BK)x + w + E\theta \in \mathcal{X}^R_N,~\\
    &\forall x\in \mathcal{X}^R_N,~\forall w \in \mathbb{W},~\forall \theta \in \Omega.
    \end{aligned}
\end{equation}
Fixed point iteration algorithms to numerically compute \eqref{eq:term_set_DF} can be found in \cite{borrelli2017predictive, kouvaritakis2016model}. Notice that \eqref{eq:MPC_R_fin} is a \emph{time varying} convex optimization problem with $\infty-$number of constraints. An efficient way to reformulate \eqref{eq:MPC_R_fin} is shown in the Appendix. After solving \eqref{eq:MPC_R_fin}, in closed-loop, we apply,
\begin{equation}\label{eq:inputCL_DF}
u_t(x_t) = 	u^\star_{t|t} = v^\star_{t|t}
\end{equation}
to system \eqref{eq:unc_system}. We then resolve the problem again at the next $(t+1)$-th time step, yielding a receding horizon strategy. 


\begin{algorithm}[h!]
    \caption{
Adaptive Robust MPC
    }
    \label{alg1}
        \begin{algorithmic}[1]
      \STATE Set $t=0$; initialize Feasible Parameter Set $\Theta_0 = \Omega$; 
    
      \STATE Compute the parameter rate of change bounds $\ubar{\nu}$ and $\bar{\nu}$ from \eqref{eq:vbar_updates}; 
      
      \STATE Form Predicted Feasible Parameter Sets $\Theta_{k|t}$ for $k=\{t,\dots,t+N\}$ using \eqref{eq:ol_fps} and \eqref{eq:fps_termc};      
      
      \STATE Compute $v^\star_{t|t}$ from \eqref{eq:MPC_R_fin} and apply $u_t = v^\star_{t|t}$ to \eqref{eq:unc_system};
      
      \STATE Obtain $x_{t+1}$, and update $\Theta_{t+1}$ as given in \eqref{eq:fps_cl_update};
            
       \STATE Set $t = t+1$, and return to step~3.
        \end{algorithmic}
\end{algorithm}


\section{Adaptive Stochastic MPC}\label{sec:stoch_MPC}
In this section we present the formulation of the Adaptive Stochastic MPC algorithm. Similar as before, we approximate \eqref{eq:OP_inf_S} by solving a finite horizon problem in receding horizon fashion. For parametrization of control policies, we use the same affine disturbance policy as in Section~\ref{sec:control_param}. 

\subsection{MPC Problem}
We use Bonferroni's inequality \cite{farina2016stochastic} to approximate the joint chance constraints on states \eqref{eq:chance_con_com}, given as:
\begin{equation}\label{eq:chance_con}
    \mathbb{P}(g^\top_jx_t 
	\leq h_j)  \geq 1-\alpha_j, H_u u_t \leq h_u,~\forall j \in \{1,\dots,s\}, 
\end{equation}
where $\alpha_j \in (0,1)$, $\sum \limits_{j=1}^{s} \alpha_j = \alpha$ and $g_j^\top$ denotes the $j$-th row of matrix $G$ for all $j \in \{1,\dots, s\}$. To ensure satisfaction of state constraints in \eqref{eq:chance_con}, it is sufficient to ensure \cite{korda2011strongly}:
\begin{align}\label{eq:chance_con_conditional}
\mathbb{P}(g^\top_jx_{t+1} \leq h_j|x_t)  \geq 1-\alpha_j,~j \in \{1,\dots, s\},~\forall t \geq 0.
\end{align} 
Therefore, the stochastic MPC controller has to solve the following optimal control problem at each time step:
\begin{equation} \label{eq:idealMPC_sto}
	\begin{aligned}
    & \min_{\mathbf{M}_t, \mathbf{v}_t}  \sum_{k=t}^{t+N-1} \ell(\bar{x}_{k|t}, v_{k|t}) + Q(\bar{x}_{t+N|t}) \\
    & ~~~\text{s.t.}~~  x_{k+1|t} = Ax_{k|t} + Bu_{k|t} + E\theta_{k|t} + w_{k|t},\\
		& ~~~~~~~~~ \bar{x}_{k+1|t} = A\bar{x}_{k|t} + Bv_{k|t},\\
		& ~~~~~~~~~ \mathbb{P}(g^\top_jx_{k+1|t} 
	\leq h_j|x_{k|t})  \geq 1-\alpha_j,\\
	& ~~~~~~~~~ u_{k|t} = \sum \limits_{j=t}^{k-1}M_{k,j|t} (w_{j|t} + E\theta_{j|t})  + v_{k|t},\\
	& ~~~~~~~~~ H_u u_{k|t} \leq h_u,\\
	& ~~~~~~~~~x_{t+N|t} \in \mathcal{X}^S_N,\\
	& ~~~~~~~~~ \forall [\theta_{t|t},\ldots,\theta_{k|t}]\in\Theta_{t|t}\times\ldots\times\Theta_{k|t},\\
	& ~~~~~~~~~ \forall [w_{t|t},\ldots,w_{k-1|t}]\in\mathbb{W}^{k-t},\\
	& ~~~~~~~~~ \forall k = \{t,\ldots,t+N-1\},~ \forall j \in \{1,\dots,s\},\\
	& ~~~~~~~~~x_{t|t} = x_t, \ \bar{x}_{t|t} = x_t,
	\end{aligned}
\end{equation}
where the terminal constraint $\mathcal{X}^S_N$ and the terminal cost function $Q(\cdot)$ are introduced to ensure feasibility and stability properties of the MPC controller \cite{mayne2000constrained, borrelli2017predictive}.

\subsection{Chance Constraint Reformulation}
The chance constraints in \eqref{eq:idealMPC_sto} can be reformulated as
\begin{align}\label{eq:chance_con_mpcreform}
    & g_j^\top (Ax_{k|t} + Bu_{k|t} + E\theta_{k|t}) \leq h_j-F^{-1}_{g^\top_jw}(1-\alpha_j), \nonumber \\ 
    & \forall [\theta_{t|t},\ldots,\theta_{k|t}]\in\Theta_{t|t}\times\ldots\times\Theta_{k|t}, \nonumber \\  
    & \forall [w_{t|t},\ldots,w_{k-1|t}]\in\mathbb{W}^{k-t}, \nonumber \\
    &\forall j\in \{1,\dots,s\},~\forall k \in \{t,\dots,t+N-1\},
\end{align}
where $F^{-1}_{g^\top_jw}(\cdot)$ is the left quantile function of $g^\top_j w$. From \eqref{eq:unc_system} and \eqref{eq:inputParam_DF_OL}, using $\mathbf{w}_t = [w^\top_{t|t},\dots,w^\top_{t+N-1|t}]^\top \in \mathbb{R}^{nN}$, $\pmb{\theta}_t = [\theta^\top_{t|t},\dots,\theta^\top_{t+N-1|t}]^\top \in \mathbb{R}^{pN}$, \textbf{E} = $\textnormal{diag}(E,E,
\dots,E) \in \mathbb{R}^{nN \times pN}$ and $\mathbf{M}_t \in \mathbb{R}^{mN \times nN}$ from Section~\ref{sec:control_param}, we can write, $x_{k|t} = A^{k-t}x_{t|t} + \mathcal{B}_k(\mathbf{v}_t+\mathbf{M}_t\mathbf{w}_t + \mathbf{M}_t\mathbf{E} \pmb{\theta}_t ) + \mathcal{C}_k(\mathbf{w}_t + \mathbf{E}\pmb{\theta}_t)$, where $\mathcal{B}_k = [A^{k-t-1}B,\dots,B,\mathbf{0}_{n \times m},\dots,\mathbf{0}_{n \times m}] \in \mathbb{R}^{n \times mN}$ and $\mathcal{C}_k = [A^{k-t-1},\dots,I_{n \times n},\mathbf{0}_{n \times n},\dots,\mathbf{0}_{n \times n}] \in \mathbb{R}^{n \times nN}$ \cite{korda2011strongly}. Using this to rewrite LHS of \eqref{eq:chance_con_mpcreform}, we obtain:
\begin{align}\label{eq:chance_con_lhs_combined}
    g_j^\top (Ax_{k|t} + Bu_{k|t} + E\theta_{k|t}) = g_j^\top(A^{k+1-t}x_t + \mathcal{B}_{k+1}\mathbf{v}_t) + g_j^\top(\mathcal{B}_{k+1}\mathbf{M}_t+A \mathcal{C}_k)(\mathbf{w}_t + \mathbf{E}\pmb{ \theta}_t) + g^\top_j(E\theta_{k|t}).
\end{align}
Denoting, $\Gamma^\star_{k|t}(\mathbf{M}_t) =  \max_{\mathbf{w}_t,\pmb{\theta}_t} g_j^\top(\mathcal{B}_{k+1}\mathbf{M}_t+A \mathcal{C}_k)(\mathbf{w}_t + \mathbf{E}\pmb{ \theta}_t) + g^\top_j(E\theta_{k|t})$, and with \eqref{eq:chance_con_lhs_combined}, we rewrite \eqref{eq:chance_con_mpcreform} as
\begin{equation*}
    \begin{aligned}
& g_j^\top(A^{k+1-t}x_t + \mathcal{B}_{k+1}\mathbf{v}_t) \leq h_j-\Gamma^\star_{k|t}(\mathbf{M}_t) - F^{-1}_{g^\top_jw}(1-\alpha_j),\\ 
&\forall \theta_{k|t} \in \Theta_{k|t},~\forall j \in \{1,\dots,s\},~\forall k \in \{t,\dots,t+N-1\}. 
    \end{aligned}
\end{equation*}
At any given $t$, the chance constraints \eqref{eq:chance_con_mpcreform} are thus robustly satisfied for all offsets $[\theta_{t|t},\ldots,\theta_{k|t}]\in\Theta_{t|t}\times\ldots\times\Theta_{k|t}$ and $[w_{t|t},\ldots,w_{k-1|t}]\in\mathbb{W}^{k-t}$ for all $k \in \{t,\dots,t+N-1\}$.

\subsection{Tractable MPC Problem}
Using the previous results,  \eqref{eq:idealMPC_sto} is equivalent to the following deterministic optimization problem: 
\begin{equation} \label{eq:idealMPC_sto_trac}
	\begin{aligned}
	     J^\star_S&(t,x_t)  :=	\\	
		& \min_{\mathbf{M}_t, \mathbf{v}_t}  \sum_{k=t}^{t+N-1} \ell(\bar{x}_{k|t}, v_{k|t}) + Q(\bar{x}_{t+N|t}) \\
		& ~~~\text{s.t.}~~   x_{k+1|t} = Ax_{k|t} + Bu_{k|t} + E\theta_{k|t} + w_{k|t},\\
		& ~~~~~~~~~  \bar{x}_{k+1|t} = A\bar{x}_{k|t} + Bv_{k|t},\\
	    & ~~~~~~~~~ g_j^\top(A^{k+1-t}x_{t|t} + \mathcal{B}_{k+1}\mathbf{v}_t) \leq h_j-\Gamma^\star_{k|t}(\mathbf{M}_t) -F^{-1}_{g^\top_jw}(1-\alpha_j),\\[-2ex]
		& ~~~~~~~~~ u_{k|t} = \sum \limits_{j=t}^{k-1}M_{k,j|t} (w_{j|t} + E\theta_{j|t})  + v_{k|t},\\
		& ~~~~~~~~~  H_u u_{k|t} \leq h_u,\\
	    & ~~~~~~~~~  x_{t+N|t} \in \mathcal{X}^S_N,\\
	    &~~~~~~~~~  \forall \theta_{k|t} \in \Theta_{k|t},~\forall k = \{t,\ldots,t+N-1\},\\
	    & ~~~~~~~~~ \forall j \in \{1,\dots,s\},\\
	    & ~~~~~~~~~ x_{t|t} = x_t, \ \bar{x}_{t|t} = x_t, 
	\end{aligned}
\end{equation}
where the terminal set $\mathcal{X}^S_N = \{x \in \mathbb{R}^n: Y_S x \leq z_S,~Y_S \in \mathbb{R}^{r_S \times n},~z_S \in \mathbb{R}^{r_S}\}$ has the properties:
\begin{equation}\label{eq:term_set_stoch}
    \begin{aligned}
    & (A+BK)x+E\theta+w \in \mathcal{X}^S_N,\\
    & H_u (Kx) \leq h_u,\\
    & g^\top_j(A+BK)x \leq -g^\top_j(E\theta)+ h_j -F^{-1}_{g^\top_jw}(1-\alpha_j),\\
    & \forall j \in \{1,\dots,s\},~\forall \theta \in \Omega,~\forall w \in \mathbb{W},~\forall x \in \mathcal{X}^S_N.
    \end{aligned}
\end{equation}
Notice that \eqref{eq:idealMPC_sto_trac} is a \emph{time varying} convex optimization problem. An efficient way to reformulate \eqref{eq:idealMPC_sto_trac} is shown in the Appendix. After solving \eqref{eq:idealMPC_sto_trac}, in closed-loop, we apply the first input,
\begin{equation}\label{eq:inputCL_DF_sto}
	u_t(x_t) = u^\star_{t|t} = v^\star_{t|t}
\end{equation}
to system \eqref{eq:unc_system}. We then resolve the problem again at the next $(t+1)$-th time step, yielding a receding horizon control design. 

\begin{algorithm}[h!]
    \caption{
Adaptive Stochastic MPC
    }
    \label{alg2}
        \begin{algorithmic}[1]
      \STATE Set $t=0$; initialize Feasible Parameter Set $\Theta_0 = \Omega$; 
    
      \STATE Compute the parameter rate of change bounds $\ubar{\nu}$ and $\bar{\nu}$ from \eqref{eq:vbar_updates}; 
      
      \STATE Form Predicted Feasible Parameter Sets $\Theta_{k|t}$ for $k=\{t,\dots,t+N\}$ using \eqref{eq:ol_fps} and \eqref{eq:fps_termc};      
      
      \STATE Compute $v^\star_{t|t}$ from \eqref{eq:idealMPC_sto_trac} and apply $u_t = v^\star_{t|t}$ to \eqref{eq:unc_system};
      
      \STATE Obtain $x_{t+1}$, and update $\Theta_{t+1}$ as given in \eqref{eq:fps_cl_update};
            
       \STATE Set $t = t+1$, and return to step~3.
        \end{algorithmic}
\end{algorithm}
\section{Feasibility and Stability Guarantees}\label{sec:feasandstab}
In this section we discuss feasibility and stability properties of Algorithm~\ref{alg1} and Algorithm~\ref{alg2}.

\begin{assumption}\label{stagecost} 
The stage cost $\ell(\cdot, \cdot)$ in \eqref{eq:MPC_R_fin} and \eqref{eq:idealMPC_sto_trac} is chosen as $\ell(\bar{x}_{k|t}, v_{k|t}) = \bar{x}_{k|t}^\top P \bar{x}_{k|t} + v_{k|t}^\top R v_{k|t}$ for some $P=P^\top \succ 0$ and $R=R^\top \succ 0$, which is continuous and positive definite. 
\end{assumption}
\begin{assumption}\label{ass: termcost}
The terminal cost $Q(\cdot)$ in \eqref{eq:MPC_R_fin} (or \eqref{eq:idealMPC_sto_trac}) is chosen as a Lyapunov function in the terminal set $\mathcal{X}^R_N$ (or $\mathcal{X}^S_N$) for the nominal closed-loop system ${x}^+ = (A+BK){x}$, for all $\bar{x} \in \mathcal{X}_N^R$ (or $\mathcal{X}_N^S$). That is, $Q((A+BK)x) - Q(x) \leq -x^\top(P+K^\top R K)x$.
\end{assumption}

\subsection{Feasibility}
\begin{theorem}\label{thm1}
Let Assumptions~\ref{ass:rate_theta}-{\ref{ass:omega_b}} hold and consider the robust  optimization problem \eqref{eq:MPC_R_fin}. Let this optimization problem be feasible at time step $t=0$ with $\Theta_0 = \Omega$ with $\Omega$ defined in   Assumption~\ref{ass:omega_b}. 
Assume the Feasible Parameter Set $\Theta_{t}$ in \eqref{eq:MPC_R_fin} is adapted based on \eqref{eq:fps_matrix}-\eqref{eq:fps_cl_update}.
Then, \eqref{eq:MPC_R_fin} remains feasible at all time steps $t\geq 0$, if the state $x_t$ is  obtained by applying  the closed-loop MPC control law \eqref{eq:MPC_R_fin}-\eqref{eq:inputCL_DF} to system \eqref{eq:unc_system}.
\end{theorem}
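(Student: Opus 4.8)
The plan is to run the standard recursive feasibility argument: assuming \eqref{eq:MPC_R_fin} is feasible at time $t$, I would construct an explicit feasible candidate at time $t+1$ by shifting the optimal policy one step forward and appending the terminal feedback $u=Kx$, and then verify that every constraint of \eqref{eq:MPC_R_fin} holds for this candidate. Let $(\mathbf{M}^\star_t,\mathbf{v}^\star_t)$ denote an optimizer at time $t$; since the applied input is $u_t=v^\star_{t|t}$, the realized combined disturbance $d_t := x_{t+1}-Ax_t-Bu_t = E\theta^a_t+w_t$ becomes known once $x_{t+1}$ is measured. The structural feature I would exploit is that the parametrization \eqref{eq:inputParam_DF_OL} feeds back precisely on the combined term $w_{j|t}+E\theta_{j|t}$, so the first predicted disturbance is exactly observable upon the transition to $t+1$.

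For the overlapping steps $k=t+1,\dots,t+N-1$, I would fix the realized first disturbance inside the time-$t$ policy: set $M_{k,j|t+1}=M^\star_{k,j|t}$ for $j\geq t+1$ and absorb the first feedback term into the auxiliary input, $v_{k|t+1}=v^\star_{k|t}+M^\star_{k,t|t}\,d_t$. A direct substitution then shows that, for matching future disturbance realizations, the candidate reproduces exactly the time-$t$ predictions evaluated at $\theta_{t|t}=\theta^a_t$ and $w_{t|t}=w_t$. For the newly appended step $k=t+N$, I would take $u_{t+N|t+1}=Kx_{t+N|t+1}$, written in the feedback form \eqref{eq:inputParam_DF_OL}.

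The verification then rests on three facts. First, because \eqref{eq:MPC_R_fin} is satisfied robustly for all $\theta_{t|t}\in\Theta_{t|t}=\Theta_t$ and $w_{t|t}\in\mathbb{W}$, and Proposition~1 guarantees $\theta^a_t\in\Theta_t$, fixing the first disturbance to its realized value $d_t$ is a legitimate instance of the time-$t$ constraints. Second, the nesting property $\Theta_{k|t+1}\subseteq\Theta_{k|t}$ from Proposition~3 ensures that the remaining robust constraints at $t+1$, which range over the smaller sets $\Theta_{k|t+1}$, are implied by those already verified at time $t$ over $\Theta_{k|t}$; hence $Cx_{k|t+1}+Du_{k|t+1}\leq b$ holds for $k=t+1,\dots,t+N-1$, and $x_{t+N|t+1}\in\mathcal{X}^R_N$. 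Third, for the appended step I would invoke the terminal-set properties \eqref{eq:term_set_DF}: the inclusion $\mathcal{X}^R_N\subseteq\{x:(x,Kx)\in\mathbb{Z}_1\}$ gives the state-input constraint at $k=t+N$, while robust positive invariance together with $\Theta_{t+N|t+1}\subseteq\Theta_{t+N|t}=\Omega$ yields $x_{t+1+N|t+1}=(A+BK)x_{t+N|t+1}+E\theta_{t+N|t+1}+w_{t+N|t+1}\in\mathcal{X}^R_N$ for all admissible offsets and disturbances, establishing the terminal constraint.

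I expect the main obstacle to be the shift itself within the affine disturbance-feedback parametrization: one must check carefully that the candidate, after absorbing the realized $d_t$ into $\mathbf{v}_{t+1}$, produces predicted states and inputs that coincide with the appropriate restriction of the time-$t$ predictions, so that robust feasibility transfers exactly. Confirming that the realized offset is a valid argument of the time-$t$ robust constraints via $\theta^a_t\in\Theta_t=\Theta_{t|t}$, and lining up the indices of the nested predicted sets from Proposition~3 with the constraints they must dominate, are the most delicate points; the remaining terminal-step checks are routine given \eqref{eq:term_set_DF}.
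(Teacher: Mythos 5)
Your proof is correct and follows essentially the same route as the paper's: shift the optimal policy one step, append the terminal feedback $Kx_{t+N|t+1}$, and invoke the nesting $\Theta_{k|t+1}\subseteq\Theta_{k|t}$ of the Predicted Feasible Parameter Sets together with the robust positive invariance of $\mathcal{X}^R_N$ from \eqref{eq:term_set_DF}. You are in fact more explicit than the paper on a point it leaves implicit---verifying that the shifted candidate stays inside the affine disturbance-feedback class of \eqref{eq:inputParam_DF_OL} by absorbing the realized disturbance $d_t = E\theta^a_t + w_t$ into the auxiliary inputs $v_{k|t+1} = v^\star_{k|t} + M^\star_{k,t|t}\,d_t$, which is exactly the justification needed for the paper's claim that the shifted sequence is an $(N-1)$-step feasible sequence at time $t+1$.
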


\begin{proof}
Let the optimization problem \eqref{eq:MPC_R_fin} be feasible at time step $t$. Let us denote the corresponding optimal input policies as $[u^\star_{t|t},u^\star_{t+1|t}(\cdot),\dots,u^\star_{t+N-1|t}(\cdot)]$. Assume the MPC controller $u^\star_{t|t}$ is applied to \eqref{eq:unc_system} in closed-loop and $\Theta_{t+1}$ is updated according to \eqref{eq:fps_cl_update}. Consider a candidate policy sequence at the next time step as:
\begin{align}\label{eq:feas_seq_next_DF_sto}
    U_{t+1}(\cdot) = [u^\star_{t+1|t}(\cdot),\dots,u^\star_{t+N-1|t}(\cdot),Kx_{t+N|t+1}].
\end{align}
We observe the following two facts: $(i)$ from Proposition~2, $\Theta_{k|t+1} \subseteq \Theta_{k|t}$, for all $k \in \{t+1,t+2,\dots,t+N\}$, and $(ii)$ from \eqref{eq:term_set_DF}, terminal set $\mathcal{X}_N^R$ is robust positive invariant for all $w \in \mathbb{W}$, and for all $\theta \in \Omega$, with state feedback controller $Kx$. Using $(i)$ we conclude $[u^\star_{t+1|t}(\cdot),u^\star_{t+2|t}(\cdot),\dots,u^\star_{t+N-1|t}(\cdot)]$ is an $(N-1)$ step feasible sequence at $(t+1)$ (excluding terminal condition), since at previous time step $t$, it robustly satisfied all stage constraints in \eqref{eq:MPC_R_fin} for $\Theta_{k|t}$, for all $k \in \{t+1,t+2,\dots,t+N-1\}$. With this feasible sequence, $x_{t+N|t+1} \in \mathcal{X}^R_N$. Using $(ii)$ we conclude, appending the $(N-1)$ step feasible sequence with $Kx_{t+N|t+1}$ ensures $x_{t+N+1|t+1} \in \mathcal{X}^R_N$, satisfying the terminal constraint at $(t+1)$. 
\end{proof}

\subsection{Input to State Stability}
We denote the $N$-step robust controllable set \cite[Chapter 10]{borrelli2017predictive} to the terminal set $\mathcal{X}^R_N$ under the MPC policy \eqref{eq:inputCL_DF} by $\mathcal{X}_0^R$, which is compact and contains the origin. 
\begin{definition}(Input to State Stability \cite{li2018input}):
Consider system \eqref{eq:unc_system} in closed-loop with the MPC controller \eqref{eq:inputCL_DF}, obtained from \eqref{eq:fps_matrix}-\eqref{eq:fps_cl_update}-\eqref{eq:MPC_R_fin}, given by
\begin{align}\label{eq:cl_loop_system}
    x_{t+1} = Ax_t + Bv^\star_{t|t} + E\theta^a_t + w_t,~\forall t\geq 0.
\end{align}
The origin is defined as Input to State Stable (ISS), with a region of attraction $\mathcal{X}^R_0 \subset \mathbb{R}^{n}$, if there exists $\mathcal{K}_\infty$ functions $\alpha_1(\cdot)$, $\alpha_2(\cdot)$, $\alpha_3(\cdot)$, a $\mathcal{K}$ function $\sigma(\cdot)$ and a function $V(\cdot,\cdot): \mathbb{R} \times \mathcal{X}^R_0 \mapsto \mathbb{R}_{\geq 0}$ continuous at the origin such that, 
\begin{align*}
    & \alpha_1(\Vert x_t \Vert ) \leq V(t,x) \leq \alpha_2(\Vert x_t \Vert ),~\forall x \in \mathcal{X}^R_0,~\forall t \geq 0,\\
    & V(t+1, x_{t+1}) - V(t,x_t) \leq -\alpha_3(\Vert x_t \Vert) + \sigma(\Vert w_i + E\theta^a_i \Vert_{\mathcal{L}_\infty}),
\end{align*}
where $\Vert \cdot \Vert$ denotes the Euclidean norm and signal norm $\Vert d_i \Vert_{\mathcal{L}_\infty} = \sup_{i = \{0,1,\dots,t\}}\Vert d_i \Vert$. Function $V(\cdot,\cdot)$ is called an ISS Lyapunov function for \eqref{eq:cl_loop_system}.
\end{definition}

\vspace{1em}

\begin{theorem}\label{isstheorem}
Let Assumptions~\ref{ass:rate_theta}-\ref{ass: termcost} hold. Then, the optimal cost of \eqref{eq:MPC_R_fin}, i.e, $J^\star_R(\cdot,\cdot)$ is an ISS Lyapunov function for closed-loop system \eqref{eq:cl_loop_system}.
\end{theorem}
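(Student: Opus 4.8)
The plan is to verify that $J^\star_R(t,x_t)$ satisfies the two defining inequalities of an ISS Lyapunov function from the preceding definition. The lower and upper bounds are the easier part; the descent inequality is where the work lies, and it hinges on constructing a good feasible (suboptimal) candidate at time $t+1$ and bounding the resulting cost difference.

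\textbf{Step 1: Bounds by $\mathcal{K}_\infty$ functions.} First I would establish $\alpha_1(\|x_t\|) \le J^\star_R(t,x_t) \le \alpha_2(\|x_t\|)$. The lower bound follows because the stage cost $\ell(\bar{x}_{t|t}, v_{t|t}) = \bar{x}_{t|t}^\top P \bar{x}_{t|t} + v_{t|t}^\top R v_{t|t}$ (Assumption 4) is positive definite and $\bar{x}_{t|t} = x_t$, so the cost dominates $\lambda_{\min}(P)\|x_t\|^2$, giving $\alpha_1(s) = \lambda_{\min}(P)s^2$. For the upper bound, I would invoke compactness of the $N$-step robust controllable set $\mathcal{X}_0^R$ and continuity of the optimal cost (or Lipschitz continuity of the value function on a compact feasible region, standard for this class of parametric QPs), to get a $\mathcal{K}_\infty$ bound $\alpha_2$; one typically uses continuity at the origin plus the fact that $J^\star_R(t,0)=0$ via the all-zero policy, so some quadratic $\alpha_2(s) = c\, s^2$ upper-bounds the cost on the compact region.

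\textbf{Step 2: Descent inequality via the shifted candidate.} The core step reuses the feasible candidate policy from the feasibility proof of Theorem 1, namely $U_{t+1}(\cdot) = [u^\star_{t+1|t}(\cdot),\dots,u^\star_{t+N-1|t}(\cdot), Kx_{t+N|t+1}]$. Since this candidate is feasible at $t+1$ (by Theorem 1, using $\Theta_{k|t+1}\subseteq\Theta_{k|t}$ and robust invariance of $\mathcal{X}_N^R$), optimality gives $J^\star_R(t+1,x_{t+1}) \le J_R(U_{t+1})$. I would then expand the telescoping difference between the candidate cost at $t+1$ and the optimal cost at $t$: the shared stage-cost terms for indices $t+1,\dots,t+N-1$ cancel against the optimal cost, the terminal cost contracts by Assumption 5 ($Q((A+BK)x)-Q(x)\le -x^\top(P+K^\top R K)x$), and what remains to subtract is the initial stage cost $\ell(x_t, v^\star_{t|t}) \ge \lambda_{\min}(P)\|x_t\|^2$, yielding the $-\alpha_3(\|x_t\|)$ term with $\alpha_3(s)=\lambda_{\min}(P)s^2$.

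\textbf{Step 3: The disturbance/offset term.} The main obstacle is the extra term $\sigma(\|w_i+E\theta^a_i\|_{\mathcal{L}_\infty})$. It arises because the nominal predicted trajectory used at time $t$ differs from the one recomputed at $t+1$: the true state evolves under $x_{t+1}=Ax_t+Bv^\star_{t|t}+E\theta^a_t+w_t$, whereas the nominal seed satisfies $\bar{x}_{t+1|t+1}=x_{t+1}$ but $\bar{x}_{t+1|t}=A x_t + B v^\star_{t|t}$ (recall $\bar\theta_t=\mathbf{0}$). Thus the nominal states along the candidate differ from the previously optimal nominal states by an amount driven by $w_t + E\theta^a_t$. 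I would bound this mismatch by propagating the perturbation $\|x_{t+1}-\bar{x}_{t+1|t}\| = \|w_t + E\theta^a_t\|$ through the (stable) closed-loop and open-loop maps, then use continuity of the quadratic stage and terminal costs on the compact region to dominate the resulting cost increase by a $\mathcal{K}$ function $\sigma$ of $\|w_i+E\theta^a_i\|_{\mathcal{L}_\infty}$. Care is needed because the candidate reuses the \emph{policies} $u^\star_{k|t}(\cdot)$, which are affine in the realized disturbances, so I would exploit the affine-disturbance-feedback parametrization \eqref{eq:inputParam_DF_OL} to write the cost gap explicitly in terms of the perturbation and conclude with a Lipschitz/quadratic bound. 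Combining Steps 2 and 3 gives $V(t+1,x_{t+1})-V(t,x_t)\le -\alpha_3(\|x_t\|)+\sigma(\|w_i+E\theta^a_i\|_{\mathcal{L}_\infty})$, completing the verification that $J^\star_R$ is an ISS Lyapunov function.
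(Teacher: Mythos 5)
Your proposal follows essentially the same route as the paper's proof: the same lower bound from positive definiteness of the stage cost, the same upper bound from compactness of $\mathcal{X}_0^R$ together with $J^\star_R(t,0)=0$, the same shifted candidate policy reused from the feasibility theorem, and the same key step of isolating the re-seeding mismatch $\bar{x}_{t+1|t+1} = \bar{x}^\star_{t+1|t} + w_t + E\theta^a_t$ and absorbing it via a Lipschitz bound on the quadratic cost-to-go over a compact set. The only cosmetic difference is in the upper bound: the paper deliberately avoids Lipschitz continuity of the value function (since $V(t,\cdot)$ is assumed continuous only at the origin) and argues purely from compactness of the constraint sets, which is one of the two alternatives you hedge between.
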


\begin{proof}
From Assumption~\ref{stagecost} we know that at time step $t$, $\alpha_1(\Vert x_t \Vert_2) \leq \ell(x_t,0) \leq J^\star_R(t,x_t)$ for some $\alpha_1(\cdot) \in \mathcal{K}_\infty$. Moreover, since \eqref{eq:MPC_R_fin} is a parametric QP, $J^\star_R(t,0) = 0$, and $\mathcal{X}_0^R$ is compact, using similar arguments as \cite[Theorem~23]{Goulart2006}, we know $J^\star_R(t,x_t) \leq \alpha_2(\Vert x_t \Vert_2)$ for some $\alpha_2(\cdot) \in \mathcal{K}_\infty$. Note that as opposed to \cite{Goulart2006}, our $\alpha_2(\cdot)$ is not obtained via Lipschitz continuity of the value function, since in our case, $V(t,x)$ is assumed continuous only at the origin. The existence of $\alpha_2(\cdot)$ is ensured by the compactness of the constraint sets in \eqref{eq:constraints_nominal}. Now say 
\begin{align}\label{eq:iss1}
J^\star_R(t,x_t) & = \sum \limits_{k=t}^{t+N-1} \ell(\bar{x}^\star_{k|t},v^\star_{k|t}) + Q(\bar{x}^\star_{t+N|t}), \nonumber \\
& = \ell(\bar{x}^\star_{t|t},v^\star_{t|t}) + q(\bar{x}^\star_{t+1|t}),
\end{align}
where $[\bar{x}^\star_{t|t},\dots,\bar{x}^\star_{t+N|t}]$ is the optimal predicted nominal trajectory under the optimal nominal input sequence $U_t^\star(\bar{x}_t) = [u^\star_{t|t}(\bar{x}_{t|t}),\dots,u^\star_{t+N-1|t}(\bar{x}_{t+N-1|t})]$ applied to nominal dynamics in \eqref{eq:MPC_R_fin}, and $q(\bar{x}^\star_{t+1|t})$ provides the total cost from $(t+1)$ to $(t+N)$ under policy $U_t^\star(\bar{x}_t)$. We proved that \eqref{eq:feas_seq_next_DF_sto} is a feasible policy sequence for \eqref{eq:MPC_R_fin} at time step $(t+1)$, where $x_{t+1} = \bar{x}_{t+1}$ is obtained with \eqref{eq:cl_loop_system}. With this feasible sequence, the optimal cost of \eqref{eq:MPC_R_fin} at $(t+1)$ is bounded as
\begin{align}\label{iss2}
    J^\star_R(t+1,x_{t+1}) & \leq \sum \limits_{k=t+1}^{t+N-1} \ell(\bar{x}_{k|t+1},v^\star_{k|t}) \nonumber + Q(\bar{x}_{t+N|t+1}), \nonumber \\
    & = q(\bar{x}_{t+1|t+1}),
\end{align}
where we have used Assumption~\ref{ass: termcost} and the feasible nominal trajectory  $\bar{x}_{k|t+1} = A^{k-t-1}(Ax_t + Bv^\star_{t|t} + w_t +  E\theta^a_t) + \sum \limits_{i=t+1}^{k-1} A^{k-1-i}B u^\star_{i|t}(\bar{x}_{k|t+1})$, for $k = \{t+2,\dots,t+N\}$. Moreover, we know that
\begin{align}\label{iss3}
    \bar{x}_{t+1|t+1}  = \bar{x}^\star_{t+1|t} + w_t + E\theta^a_t.
\end{align}
Combining \eqref{eq:iss1}--\eqref{iss3} we obtain,
\begin{align*}
    & J^\star_R(t+1,x_{t+1}) - J^\star_R (t,x_t) \nonumber \\
    & = q(\bar{x}^\star_{t+1|t} + w_t + E\theta^a_t) - \ell(\bar{x}^\star_{t|t},v^\star_{t|t}) - q(\bar{x}^\star_{t+1|t}),\\
    & \leq - \ell(\bar{x}^\star_{t|t},v^\star_{t|t}) + L_q \Vert w_t + E\theta^a_t \Vert,\\
    & \leq - \ell(\bar{x}^\star_{t|t},0) + L_q \Vert w_t + E\theta^a_t \Vert,\\
    & \leq -\alpha_3(\Vert x_t \Vert_2 ) + L_q\Vert w_i + E \theta_i^a \Vert_{\mathcal{L}_\infty},~\text{with } \alpha_1(\cdot) = \alpha_3(\cdot),
\end{align*}
where $q(\cdot)$ is $L_q$-Lipschitz, as $q(\cdot)$ is a sum of quadratic terms in compact \eqref{eq: con_robust}. Hence the origin of \eqref{eq:cl_loop_system} is ISS.
\end{proof}  
\begin{remark}
Feasibility and stability of Algorithm~\ref{alg2} can be proven in the exact same manner and hence is omitted.
\end{remark}
\section{Numerical Simulations}\label{sec:simul}

We consider the following infinite horizon optimal  control problem with unknown offset $\theta^a_t$ that satisfies Assumption~\ref{ass:rate_theta} and Assumption-\ref{ass:omega_b}:
\begin{equation}\label{eq:num_inf_DIS}
	\begin{array}{llll}
	&	\hspace{-0.cm}    	\displaystyle\min_{u_0,u_1(\cdot),\ldots}  \displaystyle\sum\limits_{t\geq0} \left \| \bar{x}_t \right\|_2^2 + 10 \left\| \bar{u}_t \right\|_2^2  \\[1ex]
		& ~~~~\text{s.t.}\\
		&~~~~~~~~~~~~x_{t+1} = Ax_t + Bu_t({x}_t) + E\theta^a_t +  w_t,\\[1ex]
		&~~~~~~~~~~~~\bar{x}_{t+1} = A\bar{x}_t + B\bar{u}_t + E\bar{\theta}_t,\\[1ex]
		&~~~~~~~~~~~~ \mathbb{P}\Big\{\begin{bmatrix}-5 \\ -2.5
		\end{bmatrix} \leq x_t \leq \begin{bmatrix}5 \\ 2.5 
		\end{bmatrix}\Big\} \geq 1-\alpha,\\[2ex]
		&~~~~~~~~~~~ -4 \leq u_t({x}_t) \leq 4,\\[0.5ex]
		& ~~~~~~~~~~~~\forall w_t \in \mathbb{W},~\forall {\theta}_t \in \Theta_t,\\[1.5ex]
		& ~~~~~~~~~~~~x_0 = x_S,\ \bar{x}_0 = x_S, \\ &~~~~~~~~~~~~ t=0,1,\ldots,
	\end{array}
\end{equation}
where $A = \begin{bmatrix} 1.2 & 1.5\\ 0 & 1.3\end{bmatrix}$ and $B = [0,1]^\top$, and Feasible Parameter Set $\Theta_t$ is updated based on \eqref{eq:fps_matrix}--\eqref{eq:fps_cl_update} for all time steps $t \geq 0$. For the robust MPC controller we pick $\alpha = 0$ and for the stochastic MPC $\alpha = 0.4$, which we split using Bonferroni's inequality as $\alpha_1 = \alpha_2 = \alpha/2$ for each of the two individual state constraints. Process noise $w_t \in \mathbb W = \{w \in \mathbb R^2: ||w||_{\infty} \leq 0.1 \}$. The initial Feasible Parameter Set is defined as $\Omega = \Theta_0 = \{ \theta \in \mathbb{R}^2: [-0.5, -0.5]^\top \leq \theta \leq [0.5, 0.5]^\top\}$. The true offset parameter $\theta^a_t$ is time varying, with rate bounded by the polytope $\mathcal{P} := [-0.05, -0.05]^\top \leq \Delta \theta^a_t \leq [0.05, 0.05]^\top$. For numerical simulations, we generate a true offset that starts from $\theta^a_0= [0.49, 0.49]^
\top$, and has a rate of change $\Delta \theta^a_t = [-0.0395, -0.0395]^\top$ as shown in Fig.~\ref{Fig:fps_adap_DIR}. The matrix $E \in \mathbb{R}^{2 \times 2}$ is picked as the identity matrix. The Adaptive Robust MPC in \eqref{eq:MPC_R_fin}, \eqref{eq:inputCL_DF}, and the Adaptive Stochastic MPC in \eqref{eq:idealMPC_sto_trac}, \eqref{eq:inputCL_DF_sto} are implemented with a control horizon of $N=6$, and the feedback gain $K$ in \eqref{eq:term_set_DF} and \eqref{eq:term_set_stoch}, is chosen to be the optimal LQR gain for system $x^+ = Ax + Bu$  with parameters $Q_\mathrm{LQR}=I$ and $R_\mathrm{LQR} = 10$. Initial state for both algorithms is $x_S = [-3.21, -0.25]^\top$.
 
Fig.~\ref{Fig:fps_adap_DIR} shows the recursive adaptation of the Feasible Parameter Set and time evolution of the true offset $\theta^a_t$. The true parameter lies within $\Omega$, and is always captured by $\Theta_t$ at all times. This evolution is kept identical for all simulation scenarios with both the algorithms. 
\begin{figure}[t]
    \centering
	\includegraphics[width= 12cm]{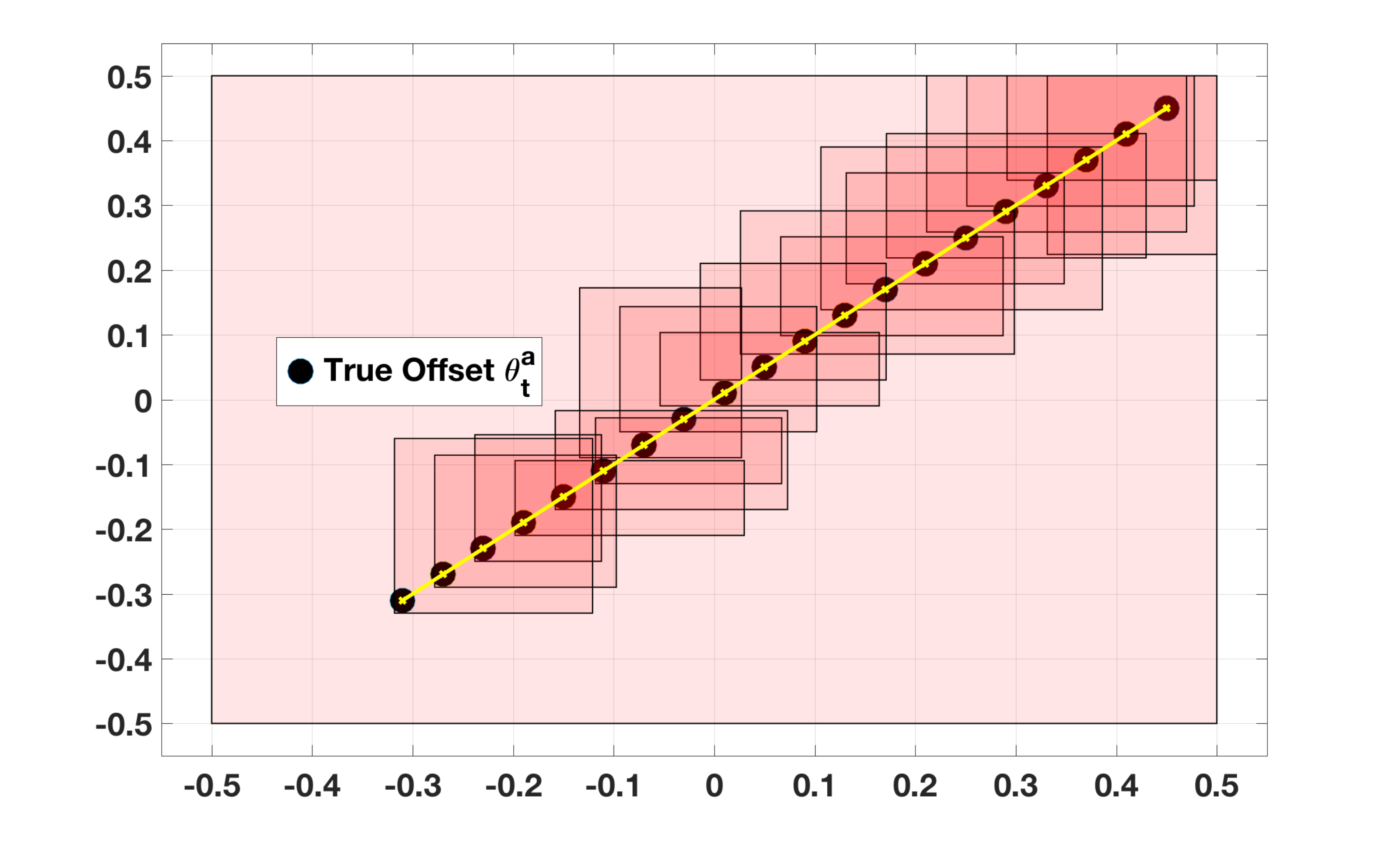}
    \caption{Feasible Parameter Set Evolution}
    \label{Fig:fps_adap_DIR}
\end{figure}
Fig.~\ref{Fig:MC_states_DIR} shows the Monte Carlo simulations for $100$ different sampled trajectories with Adaptive Robust MPC, which highlights satisfaction of constraints in \eqref{eq:num_inf_DIS} robustly $(\alpha = 0)$ for all feasible offset uncertainties $\theta_t \in \Theta_t$ and process noise $w_t \in \mathbb{W}$, for all $t \geq 0$. Such robust satisfaction of constraints is crucial for safety critical applications.
\begin{figure}[h!]
    \centering
	\includegraphics[width= 12cm]{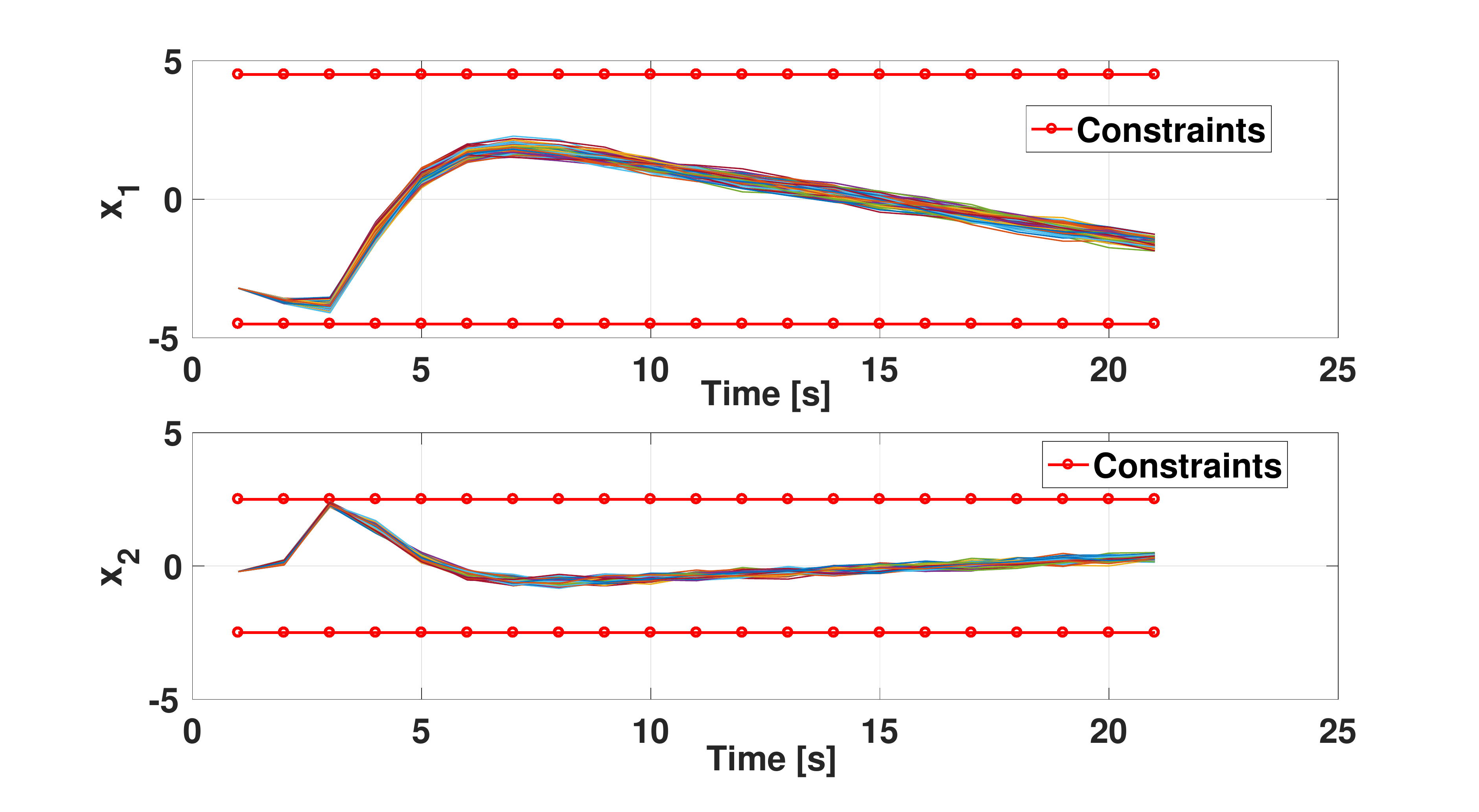}
    \caption{Monte Carlo Simulation of Robust Case}
    \label{Fig:MC_states_DIR}
\end{figure}

On the other hand, Adaptive Stochastic MPC could be applied in scenarios which are not safety critical, and where constraint violations are tolerated to improve performance (for example, lower closed-loop cost). Fig.~\ref{Fig:MC_states_DIS} shows Monte Carlo simulations for $100$ different sampled trajectories from Adaptive Stochastic MPC with the same process noise sequences as used for the previous example. This highlights satisfaction of chance constraints in \eqref{eq:num_inf_DIS} for all feasible offset uncertainties $\theta_t \in \Theta_t$ for all $t \geq 0$. The total empirical constraint violation probability is approximately $18\%$, which is lower than the allowed maximum value of $40\%$.
\begin{figure}[h!]
    \centering
	\includegraphics[width= 12cm]{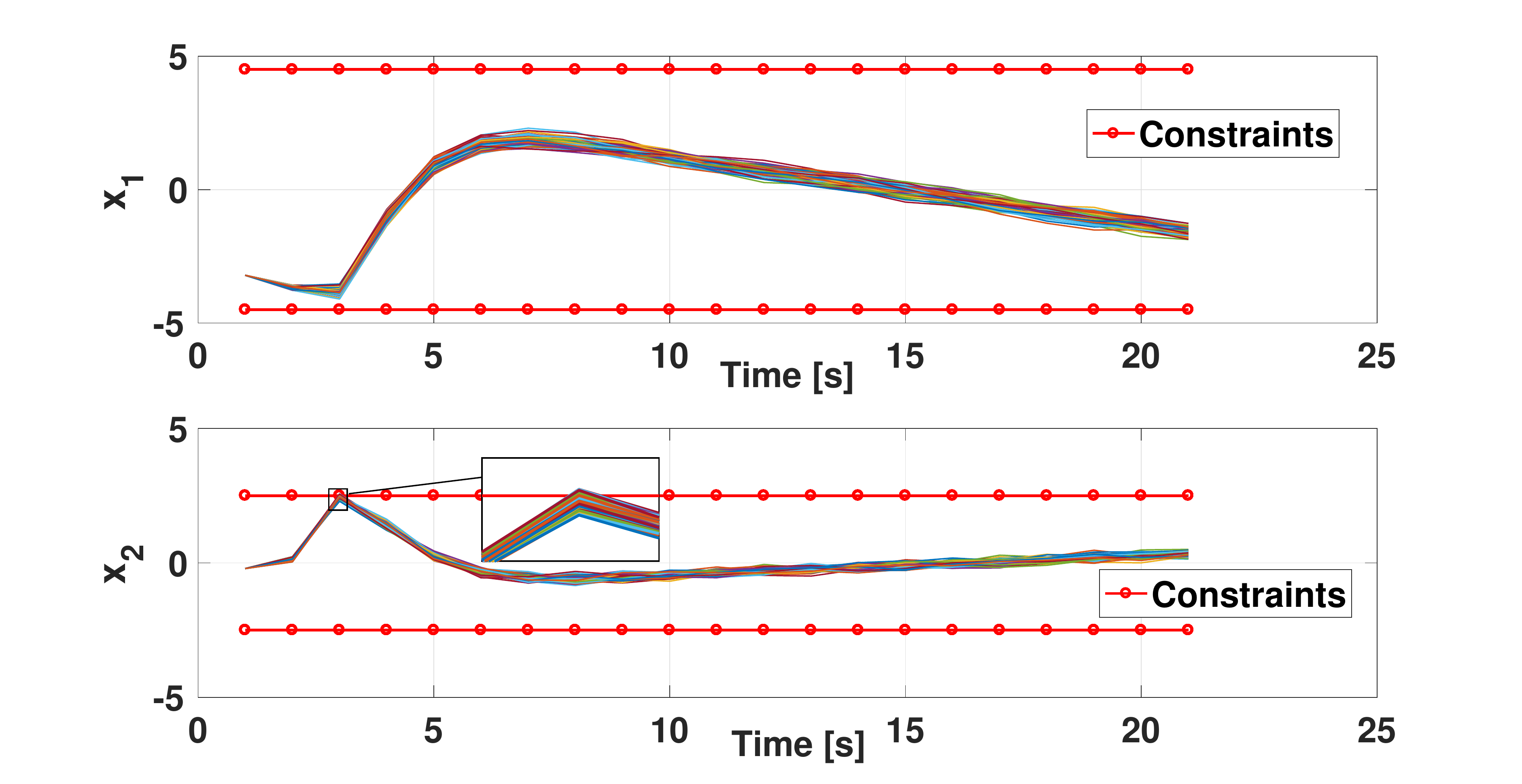}
    \caption{Monte Carlo Simulation of Stochastic Case}
    \label{Fig:MC_states_DIS}
\end{figure}
The closed-loop costs, $\sum \limits_{t=0}^{\infty}  \ell({x}^\star_{t|t}, v^\star_{t|t})$, of both the algorithms for the above Monte Carlo Simulations (under identical disturbance ($w_t$) realizations, $\theta^a_t$, initial conditions and MPC horizon lengths for both algorithms) are compared in Fig.~\ref{Fig:cost_compare}. 
\begin{figure}[t]
    \centering
	\includegraphics[width= 12cm]{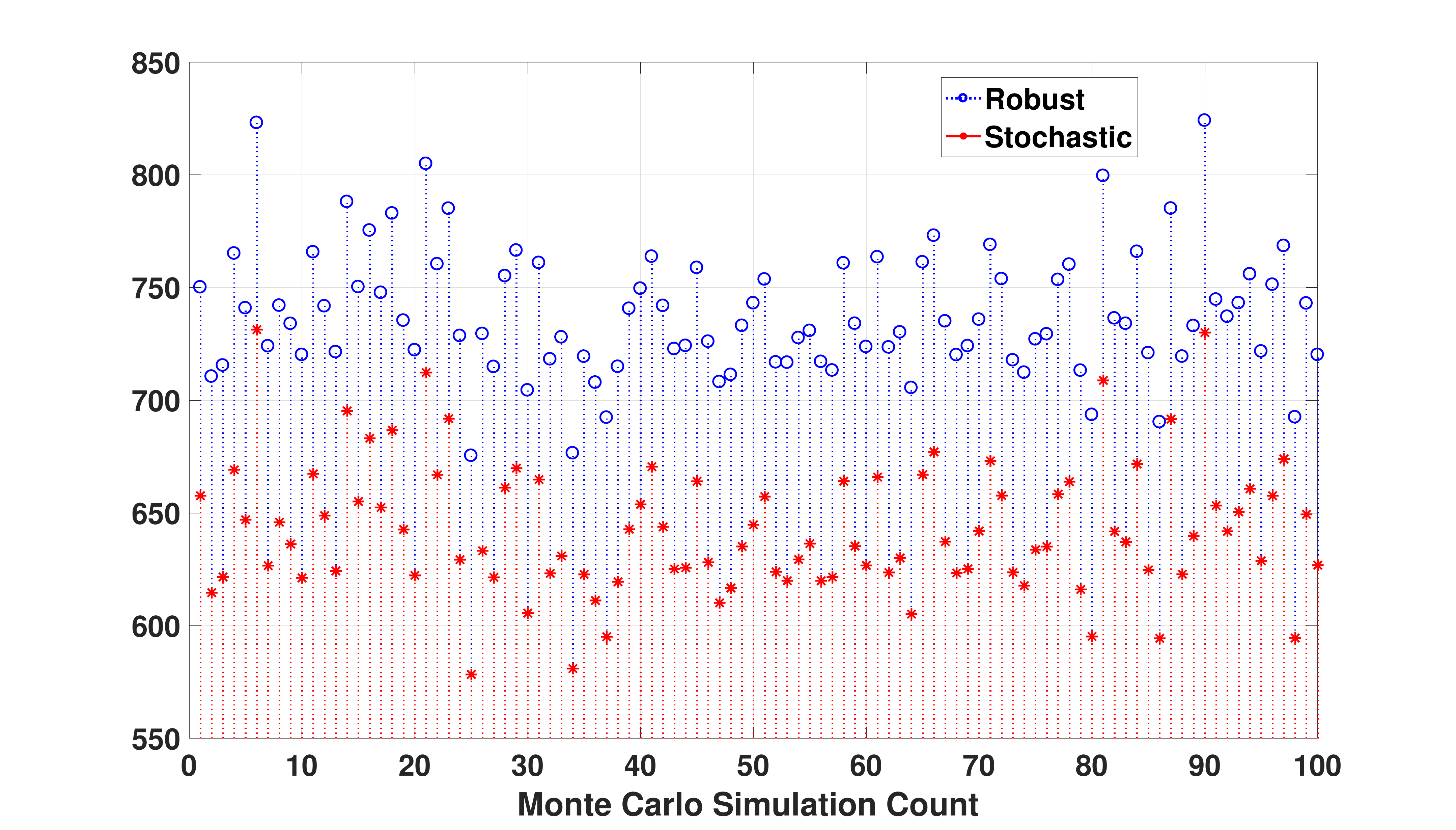}
    \caption{Comparison of Closed-Loop Cost $\textstyle\sum \limits_{t=0}^{\infty}  \ell({x}^\star_{t|t}, v^\star_{t|t})$}
    \label{Fig:cost_compare}
\end{figure}
The Adaptive Stochastic MPC algorithm delivers a reduction of $13 \%$ in average closed-loop cost compared to Adaptive Robust MPC. This indicates performance gain at the expense of hard constraint violations. 

Additionally, Fig.~\ref{Fig:cost_evol} shows the closed-loop MPC cost at each time step $t$, $\ell(x^\star_{t|t}, v^\star_{t|t})$, plotted over $100$ different trajectories, for the entire length of simulation duration.
\begin{figure}[t]
    \centering
	\includegraphics[width= 12cm]{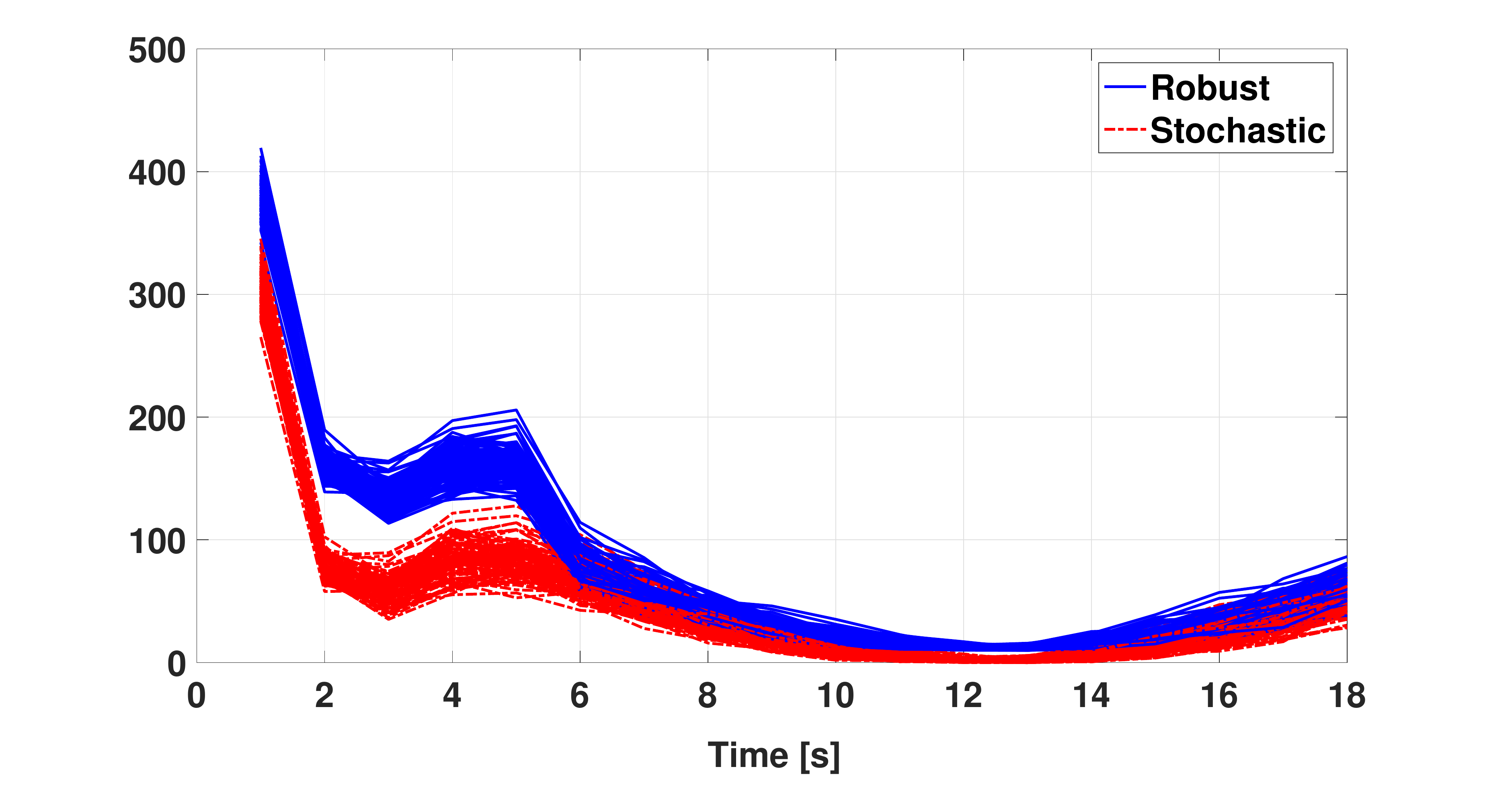}
    \caption{Stage Costs ($\ell(x^\star_{t|t}, v^\star_{t|t})$) in Closed-Loop}
    \label{Fig:cost_evol}
\end{figure}
It can be inferred from Fig.~\ref{Fig:cost_evol} that the bulk of the closed-loop cost reduction due to Adaptive Stochastic MPC over all simulated trajectories seen in Fig.~\ref{Fig:cost_compare}, occurs near the constraint violation zone (between $2$ seconds to $4$ seconds). 
\section*{Acknowledgement}
We acknowledge Ugo Rosolia for helpful discussions and reviews. This work was partially funded by Office of naval Research Grant ONR-N00014-18-1-2833.
 
\section{Conclusions}\label{sec:concl}
In this paper we proposed an adaptive MPC framework that handles both robust and probabilistic constraints. A Set Membership Method based approach is used to learn a bounded and time varying offset uncertainty in the model with available data from the system. We proved recursive feasibility and input to state stability of resulting MPC algorithms in presence of bounded additive disturbance/noise. We showed the validity and efficacy of the proposed approaches in detailed numerical simulations. 

\section*{Appendix} 

\subsection*{Proof of Proposition~1}
We prove Proposition~1 using induction, following the proof of the same in \cite{2017arXiv171207548T}. At time step $t=0$ we know that $\Theta_0 = \Omega$ and from Assumption~\ref{ass:omega_b}, $\Omega$ is nonempty and $\theta^a_0 \in \Omega$. Now using inductive argument, let us assume that the claim holds true for some $t\geq 0$. That is, for some nonempty $\Theta_t$, we have $\theta^a_t \in \Theta_t$. Now we must prove $\Theta_{t+1} \neq \phi$ and $\theta^a_{t+1} \in \Theta_{t+1}$. Let us define the following matrices:
\begin{align*}
    &\mathcal{H}^\theta_t = [(\mathcal{H}^\theta_0)^\top, (\bar{\mathcal{H}}^\theta_t)^\top]^\top \in \mathbb{R}^{r_t \times p},\\ &{h}^\theta_t = [({h}^\theta_0)^\top, (\bar{h}^\theta_t)^\top]^\top \in \mathbb{R}^{r_t},\\
    &{\Delta h}^\theta_t = [(\mathbf{0}_{r_0})^\top, (\Delta\bar{h}^\theta_t)^\top]^\top \in \mathbb{R}^{r_t},
\end{align*}
where $r_t = r_0 + 2t,~\forall t\geq 0$ is the number of faces of the Feasible Parameter Set polytope $\Theta_t$. Now from Assumption~\ref{ass:omega_b} we know: 
\begin{align}\label{eq:con_proof1}
    \mathcal{H}^\theta_0 \theta^a_{t+1} \leq h^\theta_0,
\end{align}
and from inductive assumptions we know that $\bar{\mathcal{H}}^\theta_t \theta^a_t \leq \bar{h}^\theta_t$. Therefore, we can ensure the following holds:
\begin{align}\label{eq:con_proof2}
    \bar{\mathcal{H}}^\theta_t \theta^a_{t+1} \leq \bar{h}^\theta_t + \Delta \bar{h}^\theta_t.
\end{align}
Moreover, we know that:
\begin{subequations}\label{eq:con_proof3}
\begin{align}
    &-E \theta^a_{t+1} \leq -x_{t+1}+Ax_t+Bu_t+\bar{w}-\ubar{\nu}, \\ 
    &+E \theta^a_{t+1} \leq x_{t+1}-Ax_t-Bu_t+\bar{w}+\bar{\nu}.
\end{align}
\end{subequations}
Hence, from \eqref{eq:con_proof1}, \eqref{eq:con_proof2} and \eqref{eq:con_proof3} we can have, $\mathcal{H}^\theta_{t+1} \theta^a_{t+1} \leq h^\theta_{t+1}$, where
\begin{align*}
\mathcal{H}^{\theta}_{t+1} & = [ (\mathcal{H}^{\theta}_0)^\top, (\bar{\mathcal{H}}^\theta_t)^\top, -E^\top,+E^\top]^\top \in \mathbb{R}^{r_{t+1} \times p},\\ 
    h^{\theta}_{t+1} & = \begin{bmatrix} h^{\theta}_0\\  \bar{h}^{\theta}_t + \Delta \bar{h}^{\theta}_t \\ -x_{t+1}+Ax_t+Bu_t+\bar{w}-\ubar{\nu} \\ x_{t+1}-Ax_t-Bu_t+\bar{w}+\bar{\nu} \end{bmatrix} \in \mathbb{R}^{r_{t+1}}.
    \end{align*}
This proves that $\Theta_{t+1}$ is nonempty and contains the actual offset uncertainty $\theta^a_{t+1}$ at the $(t+1)$-th time step. This concludes the proof.
\subsection*{Proof of Proposition~\ref{prop:offset_error}}
Utilizing the contraction property of Euclidean projection in \eqref{eq:proj_theta} similar to \cite{lorenzenAutomaticaAMPC}, we can write
\begin{align*}
 \frac{1}{\mu} \Vert \bar{\theta}_{t+1}  & - \theta^a_{t+1} \Vert ^2 - \frac{1}{\mu} \Vert \bar{\theta}_t - \theta^a_t \Vert ^2  \leq \frac{1}{\mu} \Vert \tilde{\theta}_{t+1} - \theta^a_{t+1} \Vert ^2 - \frac{1}{\mu} \Vert \bar{\theta}_t - \theta^a_t \Vert ^2,
\end{align*}
where $\Vert \cdot \Vert$ is the Euclidean norm. This gives 
\begin{align*}
& \frac{1}{\mu} \Vert \bar{\theta}_{t+1}  - \theta^a_{t+1} \Vert ^2 - \frac{1}{\mu} \Vert \bar{\theta}_t - \theta^a_t \Vert ^2 \\
& \leq \frac{1}{\mu} \Vert \tilde{\theta}_{t+1} - \theta^a_{t} \Vert ^2 + \frac{2}{\mu} (\tilde{\theta}_{t+1} - \bar{\theta}_t)^\top (\bar{\theta}_t - \theta^a_t) + \frac{2}{\mu} \tilde{\theta}^\top_{t+1} (\theta^a_t - \theta^a_{t+1}),\\
& = \frac{1}{\mu} \Vert \mu E^\top (\tilde{x}_{t+1|t} + w_t) \Vert ^2 + 2(\tilde{x}_{t+1|t} + w_t)^\top E (\bar{\theta}_t - \theta^a_t) + \frac{2}{\mu} \tilde{\theta}^\top_{t+1} (\theta^a_t - \theta^a_{t+1}),\\
& \leq \frac{1}{\mu} \Vert \mu E^\top (\tilde{x}_{t+1|t} + w_t) \Vert ^2 + 2(\tilde{x}_{t+1|t} + w_t)^\top E (\bar{\theta}_t - \theta^a_t) + \frac{2}{\mu} \Vert \tilde{\theta}_{t+1} \Vert \Vert (\theta^a_t - \theta^a_{t+1}) \Vert.
\end{align*}
Consider $\Omega$ and $\mathcal{P}$ sets from Assumption~1 and Assumption~2. Define $\sup_{\omega \in \Omega} \Vert \omega \Vert = \omega_M$ and $\sup_{p \in \mathcal{P}} \Vert p \Vert = p_M$. Then the above inequality can be written as \begin{align*}
& \frac{1}{\mu} \Vert \bar{\theta}_{t+1}  - \theta^a_{t+1} \Vert ^2 - \frac{1}{\mu} \Vert \bar{\theta}_t - \theta^a_t \Vert ^2 \\
& \leq \frac{1}{\mu} \Vert \mu E^\top (\tilde{x}_{t+1|t} + w_t) \Vert ^2 + 2(\tilde{x}_{t+1|t} + w_t)^\top E (\bar{\theta}_t - \theta^a_t) + \frac{2}{\mu} \omega_M p_M,\\
& \leq (\mu \Vert E \Vert^2 - 1) \Vert \tilde{x}_{t+1|t} + w_t \Vert ^2 - \Vert \tilde{x}_{t+1|t} \Vert ^2 + \Vert w_t \Vert ^2 + \frac{2}{\mu} \omega_M p_M,\\
& \leq - \Vert \tilde{x}_{t+1|t} \Vert ^2 + \Vert w_t \Vert ^2,
\end{align*}
since from Remark~2 we know $\frac{1}{\mu} > \Vert E \Vert ^2$, and we have used $x_{t+1} - \bar{x}_{t+1|t} = \tilde{x}_{t+1|t} + w_t$ and $\tilde{x}_{t+1|t} = E(\theta^a_t - \bar{\theta}_t)$. Summing both sides of the inequality from 0 to $\tilde{m}$ leads to a telescopic sum on the LHS, and we obtain, 
\begin{align*}
    & \frac{1}{\mu} \Vert \bar{\theta}_{\tilde{m}+1} - \theta^a_{\tilde{m}+1} \Vert ^2 + \sum \limits_{t=0}^{\tilde{m}} \Vert \tilde{x}_{t+1|t} \Vert^2  \leq \sum \limits_{t=0}^{\tilde{m}} \Vert w_t \Vert^2 + \frac{1}{\mu} \Vert \bar{\theta}_0 - \theta^a_0 \Vert ^2,  
\end{align*}
which, upon division by RHS on both sides gives
\begin{align*}
    \sup_{\tilde{m} \in \mathbb{N}, w_t \in \mathbb{W}, \bar{\theta}_0 \in \Omega} \frac{\sum \limits_{t=0}^{\tilde{m}} \Vert \tilde{x}_{t+1|t} \Vert^2 }{\frac{1}{\mu} \Vert \bar{\theta}_0 - \theta_0^a \Vert^2 + \sum \limits_{t=0}^{m} \Vert w_t \Vert^2 } \leq 1.
\end{align*}
\subsection*{Proof of Proposition~3}
From the definition of $\Theta_{t+1|t}$ in \eqref{eq:ol_fps}, we see that,
\begin{align*}
    &\mathcal{H}^\theta_{t+1} = [(\mathcal{H}^\theta_{t+1|t})^\top,-E^\top,+E^\top]^\top,\\     &{h}^\theta_{t+1} = \begin{bmatrix}{h}^\theta_{t+1|t}\\ -x_{t+1}+Ax_t+Bu_t+\bar{w}-\ubar{\nu} \\ x_{t+1}-Ax_t-Bu_t+\bar{w}+\bar{\nu}\end{bmatrix}.
\end{align*}
So $\Theta_{t+1|t+1} \subseteq \Theta_{t+1|t}$. Now, the matrices of the Predicted Feasible Parameter Sets at next time step, $\mathcal{H}^\theta_{k|t+1}$ and ${h}^\theta_{k|t+1}$ for all $k \in \{t+2,\dots,t+N-1\}$ are formed from $\mathcal{H}^\theta_{t+1}$ and ${h}^\theta_{t+1}$ by construction. Therefore, for all $k \in \{t+2,\dots,t+N-1\}$,
\begin{align*}
 &\mathcal{H}^\theta_{k|t+1} =  [(\mathcal{H}^\theta_{k|t})^\top,-E^\top,+E^\top]^\top,\\    
 &{h}^\theta_{k|t+1} = \begin{bmatrix}{h}^\theta_{k|t}\\ -x_{t+1}+Ax_t+Bu_t+\bar{w}-\ubar{\nu} \\ x_{t+1}-Ax_t-Bu_t+\bar{w}+\bar{\nu}\end{bmatrix},
\end{align*}
where $\mathcal{H}^\theta_{k|t}$ and ${h}^\theta_{k|t}$ are given by \eqref{eq:ol_fps}. So, for all $k \in \{t+2, \dots, t+N-1\}$, each of the sets for $\Theta_{k|t+1}$ are formed by the same inequalities which form $\Theta_{k|t}$, appended by two extra rows from the new measurement. Therefore, $\Theta_{k|t+1} \subseteq \Theta_{k|t}$ for all $k \in \{t+2,\dots,t+N-1\}$. Moreover, from \eqref{eq:fps_termc}, $\Theta_{t+N|t} = \Omega$. Using this,
\begin{align*}
    &\mathcal{H}^\theta_{t+N|t+1} =  [(\mathcal{H}^\theta_0)^\top, -E^\top, +E^\top]^\top,~\\ &{h}^\theta_{t+N|t+1} = \begin{bmatrix}{h}^\theta_0\\ -x_{t+1}+Ax_t+Bu_t+\bar{w}-\ubar{\nu} \\ x_{t+1}-Ax_t-Bu_t+\bar{w}+\bar{\nu}\end{bmatrix},
\end{align*}
and therefore $\Theta_{t+N|t+1} \subseteq \Theta_{t+N|t} = \Omega$ from the definition of $\Omega$ in \eqref{eq:Omega}. This proves the proposition. 
\subsection*{Dualization of Robust Problem}
In this section we show how the robust MPC problem \eqref{eq:MPC_R_fin} can be reformulated for efficient solving. The constraints in \eqref{eq:MPC_R_fin} can be compactly written with similar notations as \cite{Goulart2006}:
\begin{equation}\label{eq:com_con_appen_R}
    F_R\mathbf{v}_t +    \max_{\mathbf{w}_t,\pmb{\theta}_t} (F_R \mathbf{M}_t +  G_R)(\mathbf{w}_t + \mathbf{E}\pmb{\theta}_t) \leq c_R + H_R x_{t},
\end{equation}
where we denote, $\mathbf{v}_t = [v_{t|t}^\top, v_{t+1|t}^\top, \dots, v_{t+N-1|t}^\top]^\top \in \mathbb{R}^{mN}$, $\pmb{\theta}_t = [\theta_{t|t}^\top, \dots, \theta_{t+N-1|t}^\top]^\top\in \mathbb{R}^{pN}$ for all $\theta_{k|t} \in \Theta_{k|t}$, for all $k\in \{t,\dots,t+N-1\}$,  $\mathbf{E} = \textnormal{diag}(E,\dots,E) \in \mathbb{R}^{nN \times pN}$ and $\mathbf{w}_t = [w_{t|t}^\top, \dots, w_{t+N-1|t}^\top]^\top \in \mathbb{R}^{nN}$. The matrices above in \eqref{eq:com_con_appen_R} $F_R\in \mathbb{R}^{(sN+r_R)\times mN}, G_R \in \mathbb{R}^{(sN+r_R)\times nN}, c_R \in \mathbb{R}^{sN+r_R}$ and $H_R \in \mathbb{R}^{(sN+r_R)\times n}$ are obtained as:
\begin{align*}
&F_R = \begin{bmatrix}D&\mathbf{0}_{s \times m}&\cdots&\mathbf{0}_{s \times m}\\{C}B & D & \cdots & \mathbf{0}_{s \times m} \\ \vdots & \ddots & \ddots & \vdots \\ {C}A^{N-2}B & {C}A^{N-3}B & \cdots & D\\Y_RA^{N-1}B & Y_RA^{N-2}B & \cdots & Y_RB \end{bmatrix},\\
&G_R=\begin{bmatrix} \mathbf{0}_{s \times n}&\mathbf{0}_{s \times n}&\cdots & \mathbf{0}_{s \times n}\\{C}&\mathbf{0}_{s \times n}&\cdots&\mathbf{0}_{s \times n}\\ \vdots & \ddots & \ddots & \vdots \\ {C}A^{N-2}&{C}A^{N-3}&\cdots&\mathbf{0}_{s \times n}\\Y_RA^{N-1}&Y_RA^{N-2}&\cdots&Y_R \end{bmatrix},\\
&c_R = [b^\top,\dots,b^\top,z_R^\top],\\
&H_R=-[{C}^\top,({C}A)^\top,\dots,({C}A^{N-1})^\top,(Y_RA^N)^\top]^\top.
\end{align*}
For $k = \{t,\dots,t+N-1\}$, denote the set of polytopes $\mathbb{S}^R_{k|t}=\{w \in \mathbb{W},~\theta \in \Theta_{k|t}: S^R_{k|t}(w+ E\theta) \leq h^R_{k|t}\}$. Then we can define a polytope $\mathbb{S}_R = \{\mathbf{w}_t + \mathbf{E}\pmb{\theta}_t \in \mathbb{R}^{nN}: S^R (\mathbf{w}_t + \mathbf{E}\pmb{\theta}_t) \leq h^R, S^R \in \mathbb{R}^{a_R \times nN},~h^R \in \mathbb{R}^{a_R} \}$ with, $S^R = \textnormal{diag}(S^R_{t|t},\dots,S^R_{t+N-1|t}),~{h}^R = [(h^R_{t|t})^\top,\dots,(h^R_{t+N-1|t})^\top]^\top$.
Now \eqref{eq:com_con_appen_R} can be written with auxiliary decision variables $Z_R \in \mathbb{R}^{a_R \times (sN+r_R)}$ using \emph{duality} of linear programs as,
\begin{subequations}\label{eq:com_con_dual_R}
\begin{align*}
& F_R\mathbf{v}_t + Z_R^\top h_R \leq c_R + H_R x_t,\\
& (F_R \mathbf{M}_t + G_R) = Z_R^\top S_R,\\
& Z_R  \geq 0,
\end{align*}
\end{subequations}
which is a tractable linear programming problem that can be efficiently solved with any existing solver for real-time implementation of Algorithm~\ref{alg1}. 

\subsection*{Dualization of Stochastic Problem}
In this section we show how the stochastic MPC problem \eqref{eq:idealMPC_sto_trac} can be reformulated for efficient solving. The state constraints in \eqref{eq:idealMPC_sto_trac} can be compactly written as:
\begin{equation}\label{eq:com_con_max_appen_S}
    F^{(1)}_S\mathbf{v}_t +    \max_{\mathbf{w}_t,\pmb{\theta}_t} (F^{(2)}_S \bar{\mathbf{M}}_t +  G_S)\begin{bmatrix}\mathbf{w}_t + \mathbf{E}\pmb{\theta}_t\\ \pmb{\theta}_t \end{bmatrix}  \leq c_S + H_S x_{t}.
\end{equation}
where $\mathbf{v}_t$, $\pmb{\theta}_t$, $\mathbf{E}$, $\mathbf{w}_t$ are defined in Section~\ref{sec:control_param}, and $\bar{\mathbf{M}}_t = \begin{bmatrix}
\mathbf{M}_{t_{mN \times nN}}&\mathbf{0}_{mN \times pN}\\ \mathbf{0}_{pN \times nN}& \mathbf{0}_{pN \times pN} \end{bmatrix}$. The matrices $F^{(1)}_S\in \mathbb{R}^{(sN+r_S)\times mN}, F^{(2)}_S\in \mathbb{R}^{(sN+r_S)\times (mN+pN)}, G_S \in \mathbb{R}^{(sN+r_S)\times (nN+pN)}, c_S \in \mathbb{R}^{sN+r_S}$ and $H_S \in \mathbb{R}^{(sN+r_S)\times n}$ above in \eqref{eq:com_con_max_appen_S}, are obtained as:
\begin{align*}
&F^{(1)}_S =  \begin{bmatrix}(G\mathcal{B}_{t+1})^\top, (G\mathcal{B}_{t+2})^\top, \dots, (G\mathcal{B}_{t+N})^\top, (Y_S\mathcal{B}_{t+N})^\top\end{bmatrix}^\top,\\
&F^{(2)}_S = \textnormal{diag}(G,G,\dots,Y_S) \begin{bmatrix}\mathcal{B}_{t+1}&\mathbf{0}_{n \times p}&\cdots&\mathbf{0}_{n \times p}\\\mathcal{B}_{t+2} & \mathbf{0}_{n \times p} & \cdots & \mathbf{0}_{n \times p} \\ \vdots & \ddots & \ddots & \vdots \\ \mathcal{B}_{t+N} & \mathbf{0}_{n \times p}& \cdots & \mathbf{0}_{n \times p}\\\mathcal{B}_{t+N} & \mathbf{0}_{n \times p} & \cdots & \mathbf{0}_{n \times p} \end{bmatrix},\\
&G_S=\textnormal{diag}(G,G,\dots,Y_S)\begin{bmatrix} \begin{matrix}A \mathcal{C}_t\\ A \mathcal{C}_{t+1}\\ \vdots\\ A \mathcal{C}_{t+N-1}
\end{matrix}
  & \vline & \mathbf{E}_{nN \times pN} \\
  \hline
\begin{matrix}\mathcal{C}_{t+N} \end{matrix}
& \vline & \mathbf{0
}_{n \times pN}\end{bmatrix},\\
&c_S = [h_j-F^{-1}_{g_j^\top w}(1-\alpha_j),\dots,h_j-F^{-1}_{g_j^\top w}(1-\alpha_j),z_S^\top],\\
& ~~~~~~~~~~~~~~~~~~~~~~~~~~~~~~~~~~~~~~~~~~~~~~~\forall j \in \{1,\dots,s\},\\
&H_S=-[(GA)^\top,({G}A^2)^\top,\dots,({G}A^{N})^\top,(Y_SA^N)^\top]^\top.
\end{align*}
For $k = \{t,\dots,t+N-1\}$, denote the set of polytopes $\mathbb{S}^S_{k|t}=\{w \in \mathbb{W},~\theta \in \Theta_{k|t}: S^S_{k|t}\begin{bmatrix}w+ E\theta \\ \theta \end{bmatrix} \leq h^S_{k|t}\}$, then we can define a polytope $\mathbb{S}_S = \{\mathbf{w}_t,\pmb{\theta}_t \in \mathbb{R}^{(nN+pN)}: S^S \begin{bmatrix}\mathbf{w}_t+\mathbf{E}\pmb{\theta}_t \\ \pmb{\theta}_t \end{bmatrix} \leq h^S, S^S \in \mathbb{R}^{a_S\times (nN+pN)},~h^S \in \mathbb{R}^{a_S} \}$ with, $S^S = \textnormal{diag}(S^S_{t|t},\dots,S^S_{t+N-1|t}),~{h}^S = [(h^S_{t|t})^\top,\dots,(h^S_{t+N-1|t})^\top]^\top$. Now \eqref{eq:com_con_max_appen_S} can be written with auxiliary decision variables $Z_S \in \mathbb{R}^{a_S \times (sN+r_S)}$ using \emph{duality} of linear programs as:
\begin{subequations}\label{eq:com_con_dual_S}
\begin{align}
& F^{(1)}_S\mathbf{v}_t + Z_S^\top h_S \leq c_S + H_S x_t,\\
& (F^{(2)}_S \bar{\mathbf{M}}_t + G_S) = Z_S^\top S_S,~Z_S  \geq 0.
\end{align}
\end{subequations}
Moreover, \eqref{eq:idealMPC_sto_trac} imposes input constraints given by $H_u u_{k|t} \leq h_u$, for all $k \in \{t,\dots,t+N-1\}$, and for all $t \geq 0$. This can be written as:
\begin{align}\label{eq:input_dual_S}
    \bar{{H}}_u \mathbf{v}_t + \max_{\mathbf{w}_t, \pmb{\theta}_t} \bar{{H}}_u (\mathbf{M}_t(\mathbf{w}_t + \mathbf{E}\pmb{\theta}_t)) \leq \bar{h}_u,
\end{align}
where $\bar{{H}}_u = \textnormal{diag}(H_u,\dots,H_u) \in \mathbb{R}^{oN \times mN},~\bar{h}_u = [h_u^\top,\dots,h_u^\top]^\top \in \mathbb{R}^{oN}$. Similar to \eqref{eq:com_con_dual_S}, \eqref{eq:input_dual_S} can be written with auxiliary decision variables $Z_u \in \mathbb{R}^{a_R \times oN}$ as:
\begin{subequations}\label{eq:input_dual_S_com}
\begin{align}
& \bar{{H}}_u\mathbf{v}_t + Z_u^\top h_R \leq \bar{h}_u,\\
& \bar{{H}}_u \mathbf{M}_t = Z_u^\top S_R,~Z_u  \geq 0,
\end{align}
\end{subequations}
using $S_R$ and $h_R$ from the previous section. Clearly, \eqref{eq:com_con_dual_S} and \eqref{eq:input_dual_S_com} constitute a tractable linear programming problem that can be efficiently solved with any existing solver for real-time implementation of the Algorithm~\ref{alg2}. 

\renewcommand{\baselinestretch}{0.91}

\balance

\section*{References}
{
\printbibliography[heading=none, resetnumbers=true]
}

\end{document}